\documentclass[11pt]{article}

\usepackage{fullpage}
\usepackage{amsthm,amssymb,amsmath}  
\usepackage{xspace,enumerate}
\usepackage[utf8]{inputenc}
\usepackage{thmtools}
\usepackage{thm-restate}
\usepackage{authblk}

  \theoremstyle{plain}
  \newtheorem{theorem}{Theorem}
  \newtheorem{lemma}[theorem]{Lemma}  
  \newtheorem{corollary}[theorem]{Corollary}  
  \newtheorem{proposition}[theorem]{Proposition}

  \newtheorem{definition}[theorem]{Definition}
  \newtheorem{example}[theorem]{Example}
  \newtheorem*{remark}{Remark}
 
\title{Exact Distance Oracles for Planar Graphs with Failing Vertices\thanks{This work was partially supported by Israel Science Foundation (ISF) grants 794/13 and 592/17.}}

\author{Panagiotis Charalampopoulos}
\author{Shay Mozes}
\author{Benjamin Tebeka}

\affil{
Efi Arazi School of Computer Science, The Interdisciplinary Center Herzliya, Israel\\
\texttt{pcharalampo@gmail.com, smozes@idc.ac.il, benitbk@gmail.com}
}

\date{\vspace{-5ex}}

  \usepackage{hyperref}
  \hypersetup{colorlinks=true,citecolor=blue}
  
\usepackage[margin=1in]{geometry}
\usepackage{graphicx}
\usepackage{microtype}
\usepackage{xspace}
\usepackage{amsmath,amsfonts}
\usepackage{makecell}
\usepackage[ruled,noline,noend]{algorithm2e}
\usepackage{tabularx}
\usepackage{comment}
\usepackage{todonotes}
\usepackage{thmtools}
\usepackage{thm-restate}
\usepackage[capitalise]{cleveref}
\usepackage{verbatim}
\usepackage{units}
\usepackage{color}
\definecolor{darkblue}{rgb}{0,0.08,0.45}
  \usepackage{epsfig}
    \usepackage{graphics}
\usepackage{pgf,tikz}
  \usetikzlibrary{trees, arrows, shapes, snakes}
\usepackage{marvosym}
  \usepackage[caption=false]{subfig}
\usetikzlibrary{decorations.pathmorphing}
\usetikzlibrary{decorations.markings}
\usetikzlibrary{decorations.pathmorphing,shapes}
\usetikzlibrary{calc,decorations.pathmorphing,shapes}

\newcommand{\cO}{\mathcal{O}}
\newcommand{\TG}{\mathcal{T}}
\newcommand{\cOtilde}{\tilde{\cO}}
\newcommand{\Vor}{\textsf{Vor}}
\newcommand{\cone}{\textsf{cone}}

\begin{document}

\maketitle

\thispagestyle{empty}

\begin{abstract}
We consider exact distance oracles for directed weighted planar graphs in the presence of failing vertices. Given a source vertex $u$, a target vertex $v$ and a set $X$ of $k$ failed vertices, such an oracle returns the length of a shortest $u$-to-$v$ path that avoids all vertices in $X$.
We propose oracles that can handle any number $k$ of failures.
We show several tradeoffs between space, query time, and preprocessing time.
In particular, for a directed weighted planar graph with $n$ vertices and any constant $k$, we show an $\cOtilde(n)$-size, $\cOtilde(\sqrt{n})$-query-time oracle.\footnote{The $\cOtilde(\cdot)$ notation hides polylogarithmic factors.} 
We then present a space vs. query time tradeoff: for any $q \in \lbrack 1,\sqrt n \rbrack$, we propose an oracle of size $n^{k+1+o(1)}/q^{2k}$ that answers queries in $\cOtilde(q)$ time.
For single vertex failures ($k=1$), our $n^{2+o(1)}/q^2$-size, $\cOtilde(q)$-query-time oracle improves over the previously best known tradeoff of Baswana et al.~\lbrack SODA 2012\rbrack \  by polynomial factors for $q \geq n^t$, for any $t \in (0,1/2]$. For multiple failures, no planarity exploiting results were previously known.

A preliminary version of this work was presented in SODA 2019. In this version, we show improved space vs. query time tradeoffs relying on the recently proposed almost optimal distance oracles for planar graphs \lbrack Charalampopoulos et al., STOC 2019; Long and Pettie, SODA 2021\rbrack.
\end{abstract}

\clearpage
\setcounter{page}{1}

\section{Introduction}

Computing shortest paths is one of the most well-studied algorithmic problems.
In the data structure version of the problem, the aim is to compactly store information about a graph such that the distance (or the shortest path) between any queried pair of vertices can be retrieved efficiently.
Data structures supporting distance queries are called \emph{distance oracles}.
The two main measures of efficiency of a distance oracle are the space it occupies and the time it requires to answer a distance query.
Another quantity of interest is the time required to construct the oracle.

In recent decades researchers have investigated the shortest path problem in graphs subject to failures, or more broadly, to changes. One such variant is the \emph{replacement paths problem}. In this problem we are given a graph $G$ and vertices $u$ and $v$. The goal is to report the $u$-to-$v$ distance in $G$ for each possible failure of a single edge along the shortest $u$-to-$v$ path. 
Another variant is that of constructing a distance oracle that answers $u$-to-$v$ distance queries subject to edge or vertex failures ($u,v$ and the set of failures are given at query time). Perhaps the most general of these variants is designing a \emph{fully-dynamic} distance oracle; a data structure that supports distance queries as well as updates to the graph such as changes to edge lengths, edge insertions or deletions and vertex insertions or deletions.

One obvious but important application of handling failures is in geographical routing~\cite{DBLP:conf/nsdi/KimGKS05}.
Further motivation for studying this problem originates from Vickrey pricing in networks~\cite{DBLP:conf/stoc/NisanR99,DBLP:conf/focs/HershbergerS01}; see~\cite{DBLP:journals/siamcomp/DemetrescuTCR08} for a concise discussion on the relation between the problems.
A long-studied generalization of the shortest path problem is the $k$-shortest paths problem, in which not one but but several shortest paths must be produced between a pair of vertices. This problem reduces to running $k$ executions of a replacement paths algorithm, and has many applications itself~\cite{DBLP:journals/siamcomp/Eppstein98}. 

In this paper we focus on these problems, and in particular on handling vertex failures in planar graphs. 
Observe that edge failures easily reduce to vertex failures.
Indeed, by replacing each edge $(a,c)$ of $G$ with a new dummy vertex $b$ and appropriately weighted edges $(a,b)$ and $(b,c)$; the failure of edge $(a,c)$ in $G$ corresponds to the failure of vertex $b$ in the new graph. Note that this transformation does not depend on planarity. In sparse graphs, such as planar graphs, this transformation only increases the number of vertices by a constant factor.
Also note that there is no such obvious reduction in the other direction that preserves planarity. 
In general graphs, one can replace each vertex $v$ by two vertices $v_{in}$ and $v_{out}$, assign to $v_{in}$ (resp.~$v_{out}$) all the edges incoming to $v$ (resp.~outgoing from $v$) and add a 0-length directed edge $e$ from $v_{in}$ to $v_{out}$. The failure of vertex $v$ in the original graph corresponds to the failure of edge $e$ in the new graph. However, this transformation does not preserve planarity.

\subsection{Related Work}
\paragraph{General Graphs.} Demetrescu et al.~presented an $\cO(n^2 \log n)$-size oracle answering single failure distance queries in constant time~\cite{DBLP:journals/siamcomp/DemetrescuTCR08}. Bernstein and Karger, improved the construction time in~\cite{DBLP:conf/stoc/BernsteinK09}. Interestingly, Duan and Pettie, building upon this work, showed an $\cO(n^2 \log^3 n)$-size oracle that can report distances subject to two failures, in time $\cO(\log n)$~\cite{DBLP:conf/soda/DuanP09a}.
Based on this oracle, they then easily obtain an $\cOtilde(n^k)$-size oracle answering distance queries in $\cOtilde(1)$ time for any $k \geq 2$.
Oracles that require less space for more than $2$ failures have been proposed, such as the ones presented in~\cite{DBLP:journals/talg/WeimannY13,DBLP:conf/focs/BrandS19}, but at the expense of $\Omega (n)$ query time.
Such oracles are unsatisfactory for planar graphs, where single source shortest paths can be computed in linear or nearly linear time.

\paragraph{Planar Graphs.} 
Exact (failure-free) distance oracles for planar graphs have been studied extensively over the past three decades~\cite{DBLP:conf/wg/Djidjev96,DBLP:conf/esa/ArikatiCCDSZ96,DBLP:conf/stoc/ChenX00,FR,DBLP:conf/soda/MozesS12,DBLP:journals/algorithmica/Cabello12,DBLP:conf/focs/Cohen-AddadDW17,Vorexact,ourexact,DBLP:journals/corr/abs-2009-14716,DBLP:journals/corr/abs-2007-08585}.
A very recent series of papers~\cite{Vorexact,DBLP:conf/focs/Cohen-AddadDW17,ourexact,DBLP:journals/corr/abs-2007-08585,DBLP:conf/esa/Charalampopoulos20} has established Voronoi diagrams as a useful tool for designing distance oracles in planar graphs.
In particular, in~\cite{DBLP:journals/corr/abs-2007-08585}, the authors showed an $n^{1+o(1)}$-size, $\log^{2+o(1)}n$-query-time oracle.

As for handling failures, the replacement paths problem (i.e.~when both the source and destination are fixed in advance) can be solved in nearly linear time~\cite{DBLP:journals/talg/EmekPR10,DBLP:journals/talg/KleinMW10,DBLP:conf/soda/Wulff-Nilsen10}.
For the single source, single failure version of the distance oracle problem (i.e.~when the source vertex is fixed at construction time, and the query specifies just the target and a single failed vertex), Baswana et al.~\cite{DBLP:conf/soda/BaswanaLM12} presented an oracle with size and construction time $\cO(n \log^4 n)$ that answers queries in $\cO(\log^3 n)$ time. 
They then showed an oracle of size $\cOtilde(n^2/q)$ for the general single failure problem (i.e.~when the source, destination, and failed vertex are all specified at query time), that answers queries in time $\cOtilde(q)$ for any $q \in [1,n^{1/2}]$.
They concluded the paper by asking whether it is possible to design a compact distance oracle for a planar digraph which can handle multiple vertex failures. We answer this question in the affirmative.

Fakcharoenphol and Rao, in their seminal paper~\cite{FR}, presented distance oracles that require $\cO(n^{2/3} \log^{7/3} n)$ and $\cO(n^{4/5} \log^{13/5} n)$  amortized time per update and query for non-negative and arbitrary edge-weight updates respectively.\footnote{Though this is not mentioned in~\cite{FR}, the query time can be made worst case rather than amortized by standard techniques.} The space required by these oracles is $\cO(n \log n)$.
Klein presented a similar data structure in~\cite{MSSP} for the case where edge-weight updates are non-negative, requiring time $\cO(n^{2/3} \log^{5/3} n)$.
Klein's result was extended in~\cite{DBLP:conf/stoc/ItalianoNSW11}, where, assuming non-negativity of edge-weight updates, the authors showed how to handle edge deletions and insertions (not violating the planarity of the embedding), and in~\cite{DBLP:journals/talg/KaplanMNS17}, where the authors showed how to handle negative edge-weight updates, all within the same time complexity.
In fact, these results can all be combined, and along with a recent slight improvement on the running time of FR-Dijkstra~\cite{DBLP:conf/icalp/GawrychowskiK18}, they yield a dynamic distance oracle that can handle any of the aforementioned edge updates and queries within time $\cO(n^{2/3} \frac{\log^{5/3} n}{\log^{4/3} \log n})$.
We further extend these results by showing that vertex deletions and insertions can also be handled within the same time complexity. The main challenge lies in handling vertices of high degree.

An exact fault-tolerant distance labeling scheme for planar graphs, accommodating for a single failure was recently presented~\cite{DBLP:conf/sirocco/Bar-NatanCGMW21}.
For the case where one is willing to settle for approximate distances, Abraham et al.~\cite{DBLP:conf/stoc/AbrahamCG12} gave a $(1+\epsilon)$ labeling scheme for undirected planar graphs with polylogarithmic size labels, such that a $(1+\epsilon)$-approximation of the distance between vertices $u$ and $v$ in the presence of $|F|$ vertex or edge failures can be recovered from the labels of $u,v$ and the labels of the failed vertices in $\cOtilde(|F|^2)$ time. They then use this labeling scheme to devise a fully dynamic $(1+\epsilon)$-distance oracle with size $\cOtilde(n)$ and $\cOtilde(\sqrt{n})$ query and update time.\footnote{A fully dynamic distance oracle supports arbitrary edge and vertex insertions and deletions, and length updates.}

On the lower bounds side, it is known that an exact dynamic oracle requiring amortized time $\cO(n^{1/2-\delta})$, for any constant $\delta > 0$, for both edge-weight updates and distance queries, would refute the APSP conjecture, i.e.~that there is no truly subcubic combinatorial algorithm for solving the all-pairs shortest path problems in weighted (general) graphs~\cite{DBLP:conf/focs/AbboudD16}.

\subsection{Our Results and Techniques}
In this work we focus on distance queries subject to vertex failures in planar graphs. Our results can be summarized as follows.

\begin{enumerate}
\item We show how to preprocess a directed weighted planar graph $G$ in $\cOtilde(n)$ time into an oracle of size $\cOtilde(n)$ that, given a source vertex $u$, a target vertex $v$, and a set $X$ of $k$ failed vertices, reports the length of a shortest $u$-to-$v$ path in $G \setminus X$ in  $\cOtilde(\sqrt{kn})$ time. See~\cref{lem:fr}. 
\item We extend the exact dynamic distance oracles mentioned in the previous section to also handle vertex insertions and deletions without changing their space and time bounds. See~\cref{thm:dyn}.
\item For $k$ allowed failures, and for any $r \in [1,n]$, we show how to construct an $n^{k+1+o(1)}/r^{k}$-size oracle that answers queries in time $\cOtilde(k\sqrt{r})$.
For $k=1$, this improves over the previously best known tradeoff of Baswana et al.~\cite{DBLP:conf/soda/BaswanaLM12} by polynomial factors for $r \geq n^t$, for any $t \in (0,1]$.
To the best of our knowledge, this is the first tradeoff for $k>1$. See~\cref{fig:tradeoff} for an illustration and~\cref{cor:main,thm:main2} for more tradeoffs.
\end{enumerate}

\begin{figure*}[ht]
\resizebox{\textwidth}{!}{
 % fig_tradeoff_new
\definecolor{uququq}{rgb}{0.25,0.25,0.25}
\definecolor{xdxdff}{rgb}{0.49,0.49,1}
\subfloat{
\begin{tikzpicture}[line cap=round,line join=round,>=triangle 45,x=6cm,y=5cm]
\draw[->,color=black] (-0.025,0) -- (1.1,0);
\draw[color=black] (0.0,2pt) -- (0.0,-2pt) node[below] {\footnotesize $1$};
\draw[color=black] (0.5,2pt) -- (0.5,-2pt) node[below] {\footnotesize $3/2$};
\draw[color=black] (1,2pt) -- (1,-2pt) node[below] {\footnotesize $2$};

\draw[->,color=black] (0,-0.025) -- (0,0.6);
\draw[color=black] (2pt,0) -- (-2pt,0) node[left] {\footnotesize $0$};
\draw[color=black] (2pt,0.5) -- (-2pt,0.5) node[left] {\footnotesize $1/2$};
\draw[color=black] (1,-0.15) node {$\lg S/\lg n$};
\draw[color=black] (0.16,0.55) node {$\lg Q/\lg n$};

\clip(-0.1,-0.1) rectangle (1.1,0.6);

\draw (0.5,0.5)-- (1,0);
\fill [color=uququq] (0.5,0.5) circle (1.5pt);

\draw[dashed] (0,0.5) -- (1,0);

\fill [color=uququq] (1,0) circle (1.5pt);

\draw[color=black] (0.975,0.1) node {\small \cite{DBLP:conf/soda/DuanP09a,DBLP:journals/siamcomp/DemetrescuTCR08,DBLP:conf/stoc/BernsteinK09}};
\draw[color=black] (0.775,0.31) node {\small \cite{DBLP:conf/soda/BaswanaLM12}};
\draw[color=black] (0.35,0.23) node {\small [Sec.~\ref{sec:newtradeoff}]};
\end{tikzpicture}}
\subfloat{
\begin{tikzpicture}[line cap=round,line join=round,>=triangle 45,x=6cm,y=5cm]
\draw[->,color=black] (-0.025,0) -- (1.1,0);
\draw[color=black] (0.0,2pt) -- (0.0,-2pt) node[below] {\footnotesize $1$};
\draw[color=black] (0.25,2pt) -- (0.25,-2pt) node[below] {\footnotesize $2$};
\draw[color=black] (0.5,2pt) -- (0.5,-2pt) node[below] {\footnotesize $3$};
\draw[color=black] (0.75,2pt) -- (0.75,-2pt) node[below] {\footnotesize $4$};
\draw[color=black] (1,2pt) -- (1,-2pt) node[below] {\footnotesize $5$};

\draw[->,color=black] (0,-0.025) -- (0,0.6);
\draw[color=black] (2pt,0) -- (-2pt,0) node[left] {\footnotesize $0$};
\draw[color=black] (2pt,0.5) -- (-2pt,0.5) node[left] {\footnotesize $1/2$};
\draw[color=black] (1,-0.15) node {$\lg S/\lg n$};
\draw[color=black] (0.16,0.55) node {$\lg Q/\lg n$};

\clip(-0.1,-0.1) rectangle (1.1,0.6);

\draw[dashed] (0,0.5) -- (0.25,0);

\draw[dashed, color=red] (0,0.5) -- (0.25,0.25);
\fill [color=red] (0.25,0) circle (1.5pt);

\draw[dashed, color=green] (0,0.5) -- (0.5,0.16667);
\fill [color=green] (0.5,0) circle (1.5pt);

\draw[dashed, color=blue] (0,0.5) -- (0.75,0.125);
\fill [color=blue] (0.75,0) circle (1.5pt);

\draw[dashed, color=magenta] (0,0.5) -- (1,0.1);
\fill [color=magenta] (1,0) circle (1.5pt);

\draw[color=black] (0.6,0.4) node {\small [Sec.~\ref{sec:newtradeoff}] };
\draw[color=black] (0.25,0.055) node {\small \cite{DBLP:conf/soda/DuanP09a}};
\draw[color=black] (0.5,0.055) node {\small \cite{DBLP:conf/soda/DuanP09a}};
\draw[color=black] (0.75,0.055) node {\small \cite{DBLP:conf/soda/DuanP09a}};
\draw[color=black] (1,0.055) node {\small \cite{DBLP:conf/soda/DuanP09a}};
\end{tikzpicture}
}

}
\caption{Left: Tradeoff of the Space ($S$) vs.~the Query time ($Q$) for exact distance oracles for a single failed vertex (i.e.~$k=1$) on a doubly logarithmic scale, hiding subpolynomial factors. The previous tradeoff is indicated by a solid line, while the new tradeoff is indicated by a dashed line.
Right: The same tradeoff for $k=1,\ldots ,5$, shown with different colours.
The points on the $x$-axis correspond to the result of~\cite{DBLP:conf/soda/DuanP09a}, while the new tradeoffs are indicated by dashed lines.\label{fig:tradeoff}}
\end{figure*}

Our nearly-linear space oracle that reports distances in the presence of $k$ failures in $\cOtilde(\sqrt{kn})$ time is obtained by adapting a technique of Fakcharoenphol and Rao~\cite{FR}. 
In a nutshell, a planar graph can be recursively decomposed using small cycle separators, such that, in each level of the decomposition, the boundary of each piece
(i.e.~the vertices of the piece that also belong to other pieces in this level) 
is a union of a constant number of cycles with relatively few vertices. Instead of working with the given planar graph, one computes distances over its \emph{dense distance graph} (DDG); a non-planar graph on the boundary vertices of the pieces which captures the distances between boundary vertices within each of the underlying pieces.  
Fakcharoenphol and Rao developed an efficient implementation of Dijkstra's algorithm on the DDG. This  algorithm, nicknamed \emph{FR-Dijkstra}, runs in time roughly proportional to the number of vertices of the DDG (i.e.~boundary vertices), rather than in time proportional to the number of vertices in the planar graph. Roughly speaking, Fakcharoenphol and Rao show that to obtain distances from $u$ to $v$ with $k$ edge failures, it (roughly) suffices to consider just the boundary vertices of the pieces in the recursive decomposition that contain failed edges. 
Since pieces at the same level of the recursive decomposition are edge-disjoint, 
the total number of boundary vertices in all the required pieces is only $\cO(\sqrt{kn})$. This $\cOtilde(n)$-size, $\cOtilde(\sqrt{kn})$-query-time oracle, supporting distance queries subject to a batch of $k$ edge cost updates, leads to their dynamic distance oracle. 

The difficulty in handling vertex failures is that a high degree vertex $x$ may be a boundary vertex of many (possibly $\Omega(n)$) pieces in the recursive decomposition. Then, if $x$ fails, one would have to consider too many pieces and too many boundary vertices. Standard techniques such as degree reduction by vertex splitting are inappropriate because when a vertex fails all its copies fail. To overcome this difficulty we define a variant of the dense distance graph which, instead of capturing shortest path distances between boundary vertices within a piece, only captures distances of paths that are internally disjoint from the boundary. We show that such distances can be computed efficiently, and that it then suffices to include in the FR-Dijkstra computation (roughly) only pieces that contain $x$, but not as a boundary vertex. This leads to our nearly-linear-size oracle reporting distances in the presence of $k$ failures in $\cOtilde(\sqrt{kn})$ time (item 1 above). See~\cref{sec:FR}.
Plugging the same technique into the existing dynamic distance oracles extends them to support vertex deletions (item 2 above). See~\cref{sec:dyn}.

Our main result, the space vs. query time tradeoff (item 3 above), is obtained by a combination of this technique, employment of external $DDG$s, and the recent static exact distance oracle presented in~\cite{DBLP:journals/corr/abs-2007-08585}. See~\cref{sec:newtradeoff}. 
In the case where one is willing to sacrifice space in order to make preprocessing more efficient, we show an alternative tradeoff in~\cref{sec:oldtradeoff}.
Such an oracle could be preferable in the case that one has to reconstruct the data structure every once in a while due to unfixable failures or other updates in the graph.
This tradeoff is achieved by a combination of FR-Dijkstra on our variant of the DDG with $r$-divisions, external $DDG$s, and efficient point location in Voronoi diagrams ---a tool that is used internally by the exact oracles we use as a black box in the other trafeoff.
Finally, in~\cref{sec:prec} we show how to efficiently construct our oracles; in particular, the efficient construction of external $DDG$s may be of independent interest.

\section{Preliminaries}\label{sec:prel}

In this section we review the main techniques required for describing our result. Throughout the paper we consider a weighted directed planar graph $G=(V(G),E(G))$, embedded in the plane.
(We use the terms weight and length for edges and paths interchangeably throughout the paper.)
We use $|G|$ to denote the number of vertices in $G$. Since planar graphs are sparse, $|E(G)| = \cO(|G|)$ as well.
For an edge $(u,v)$, we say that $u$ is its tail and $v$ is its head.
$d_G(u,v)$ denotes the distance from $u$ to $v$ in $G$.
We denote by $d_G(u,v,X)$ the distance from $u$ to $v$ in $G\setminus X$, where $X \in V(G)$ or $X \subset V(G)$; if the reference graph is clear from the context we may omit the subscript.
We assume that the input graph has no negative length cycles.
If it does, we can detect this in $\cO(n \frac{\log^2 n}{\log \log n})$ time by computing single source shortest paths from any vertex~\cite{DBLP:conf/esa/MozesW10}.
In the same time complexity, we can transform the graph in a standard way so that all edge weights are non-negative and shortest paths are preserved.
We further assume that shortest paths are unique as required for a result from~\cite{Vordiam} that we use; this can be ensured in $\cO(n)$ time by a deterministic perturbation of the edge weights~\cite{DBLP:conf/stoc/0001FL18}.
Each original distance can be recovered from the corresponding distance in the transformed graph in constant time.

\paragraph{Separators and recursive decompositions in planar graphs.}
Miller~\cite{DBLP:conf/stoc/Miller84} showed how to compute a Jordan curve that intersects the graph at $\cO(\sqrt{n})$ vertices and separates it into two pieces with at most $2n/3$ vertices each. Jordan curve separators can be used to recursively separate a planar graph until pieces have constant size.
The authors of~\cite{DBLP:conf/stoc/KleinMS13} show how to obtain a complete recursive decomposition tree $\TG$ of $G$ in $\cO(n)$ time. 
$\TG$ is a binary tree whose nodes correspond to subgraphs of $G$ (pieces), with the root being all of $G$ and the leaves being pieces of constant size.
For each vertex $u$ of $G$, we fix an arbitrary leaf-piece in $\TG$ that contains $u$, and denote this piece by $P_u$.
We identify each piece $P$ with the node representing it in $\TG$. We can thus abuse notation and write $P\in \TG$.

An $r$-division~\cite{DBLP:journals/siamcomp/Frederickson87} of a planar graph, for $r \in [1,n]$, is a decomposition of the graph into $\cO(n/r)$ pieces, each of size $\cO(r)$, such that each piece has $\cO(\sqrt{r})$ boundary vertices, i.e.~vertices incident to edges in other pieces.
Another usually desired property of an $r$-division is that the boundary vertices lie on a constant number of faces of the piece (holes).
For every $r$ larger than some constant, an $r$-division with this property (i.e.~few holes per piece) is represented in the decomposition tree $\TG$ of~\cite{DBLP:conf/stoc/KleinMS13}.
Throughout the paper, to avoid confusion, we use ``nodes" when referring to $\TG$ and ``vertices" when referring to $G$.
We denote the boundary vertices of a piece $P$ by $\partial P$. We refer to non-boundary vertices as internal.

\begin{lemma}[\cite{Vorexact}]\label{lem:rdiv}
Each node in $\TG$ corresponds to a piece such that
(i) each piece has $\cO(1)$ holes,
(ii) the number of vertices in a piece at depth $\ell$ in $\TG$ is $\cO(n/c^\ell_1)$, for some constant $c_1 > 1$,
(iii) the number of boundary vertices in a piece at depth $\ell$ in $\TG$ is $\cO(\sqrt{n}/c^\ell_2)$, for some constant $c_2 > 1$.
\end{lemma}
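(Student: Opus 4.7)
The plan is to construct $\TG$ top-down by recursively applying planar cycle separators in the style of Miller~\cite{DBLP:conf/stoc/Miller84}. Initialize the root of $\TG$ to $G$. At each internal node, corresponding to a piece $P$, apply a simple-cycle separator that partitions the vertices of $P$ into two subsets, each of size at most $2|P|/3$, using $\cO(\sqrt{|P|})$ vertices. Form two child pieces accordingly and recurse, stopping when a piece reaches constant size.

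For property (ii), a straightforward induction along any root-to-leaf path suffices: if $P$ has $|P|$ vertices and its two children are $P_1,P_2$, then $|P_i|\le (2/3)|P|+\cO(\sqrt{|P|})\le \alpha|P|$ for some constant $\alpha<1$ (once $|P|$ exceeds a threshold; smaller cases are absorbed into the hidden constant). Hence any piece at depth $\ell$ has $\cO(n/c_1^\ell)$ vertices with $c_1=1/\alpha>1$.

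For property (iii), I would reason as follows. The boundary $\partial P_\ell$ of a piece $P_\ell$ at depth $\ell$ is contained in the union, restricted to $P_\ell$, of the separators used at its ancestors. The separator applied at depth $i\le \ell$ to the corresponding ancestor has $\cO(\sqrt{|P_i|})=\cO(\sqrt{n}/c_1^{i/2})$ vertices. Summing over $i=0,\ldots,\ell$ yields a geometric series dominated by its last term, so $|\partial P_\ell|=\cO(\sqrt{n}/c_2^\ell)$ with $c_2=\sqrt{c_1}>1$.

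The main obstacle is property (i): keeping the number of holes per piece bounded by a constant across every level of the recursion. A careless separator can split a single hole into many smaller ones or create several new holes at once, blowing up the hole count along a root-to-node path. To overcome this I would adopt the construction of~\cite{DBLP:conf/stoc/KleinMS13}, in which the separator chosen at each step respects the existing holes of $P$: it either partitions the current holes into two groups with at most one new hole created per child, or subdivides a single hole in a controlled way. By selecting each separator relative to a carefully chosen face and applying Miller's cycle lemma on a triangulated auxiliary graph of bounded genus, one can argue inductively that every piece has at most a fixed constant number of holes. This is the technically delicate point, and the cleanest route is to appeal directly to the tree-of-pieces construction referenced in the lemma statement.
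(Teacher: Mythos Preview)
The paper itself does not prove this lemma; it is simply quoted from~\cite{Vorexact} (and ultimately rests on the decomposition of~\cite{DBLP:conf/stoc/KleinMS13}). So there is no ``paper's own proof'' to compare against beyond the citation. That said, your sketch contains a genuine error in the argument for~(iii).

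You bound $|\partial P_\ell|$ by summing the sizes of all ancestor separators: the separator at depth $i$ has $\cO(\sqrt{n}/c_1^{i/2})$ vertices, and you sum over $i=0,\ldots,\ell$. But this geometric series is \emph{decreasing}, so it is dominated by its \emph{first} term, not its last. The sum is $\cO(\sqrt{n})$, independent of $\ell$, and you cannot conclude $|\partial P_\ell|=\cO(\sqrt{n}/c_2^\ell)$ from it. Concretely, nothing in your construction prevents all $\Theta(\sqrt{n})$ vertices of the very first separator from landing in a single piece arbitrarily deep in $\TG$.

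The missing idea is that the separator at each step must be chosen to balance not only the total vertex count but also the \emph{boundary} vertices of the piece. In the construction of~\cite{DBLP:conf/stoc/KleinMS13} this is exactly what the interplay between ordinary separation steps and hole-control steps achieves: because the existing boundary lies on $\cO(1)$ holes, one can select a cycle separator so that each child inherits at most a constant fraction $\beta<1$ of the parent's boundary, plus the $\cO(\sqrt{|P|})$ new separator vertices. This yields the recurrence $b_{\ell+1}\le \beta\, b_\ell + \cO(\sqrt{n}/c_1^{\ell/2})$, whose solution is $b_\ell=\cO(\sqrt{n}/c_2^\ell)$ for a suitable $c_2>1$. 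Thus property~(i) is not merely a separate technical nuisance; it is precisely the structural ingredient that makes your argument for~(iii) go through, and the two must be established together rather than sequentially.
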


We use the following well-known bounds (see e.g.,~\cite{Vorexact}).

\begin{proposition}\label{prop:1}
$\sum_{P \in \TG}|P|=\cO(n \log n)$, $\sum_{P \in \TG}|\partial P|=\cO(n)$ and $\sum_{P \in \TG}|\partial P|^2=\cO(n \log n)$.
\end{proposition}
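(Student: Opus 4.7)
The plan is to sum each of the three quantities level by level over the depths of $\TG$, invoking \cref{lem:rdiv} at each level. First, since piece sizes shrink by at least a factor $c_1 > 1$ per level and the leaves of $\TG$ have constant size, the depth of $\TG$ is $L = \cO(\log_{c_1} n) = \cO(\log n)$, so the outer sum over levels has only $\cO(\log n)$ terms.

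For the first bound, I would observe that the pieces at any fixed depth $\ell$ are essentially edge-disjoint subgraphs of $G$: only boundary vertices are shared between siblings, and each such vertex lies in $\cO(1)$ pieces at a given level (this is a standard consequence of the cycle-separator-based decomposition of~\cite{DBLP:conf/stoc/KleinMS13}). Hence $\sum_{P \text{ at depth } \ell}|P| = \cO(n)$, and summing over $\cO(\log n)$ levels gives $\sum_{P\in\TG}|P|=\cO(n\log n)$.

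The same observation shows that the number of pieces at depth $\ell$ is $\cO(c_1^\ell)$, since their sizes total $\cO(n)$ while each is $\Theta(n/c_1^\ell)$. Plugging in \cref{lem:rdiv}(iii), the per-level sums become $\cO(c_1^\ell \cdot \sqrt{n}/c_2^\ell) = \cO(\sqrt{n}(c_1/c_2)^\ell)$ for $|\partial P|$, and $\cO(c_1^\ell \cdot n/c_2^{2\ell}) = \cO(n\,(c_1/c_2^2)^\ell)$ for $|\partial P|^2$. Because the boundary of any piece of size $s$ is $\cO(\sqrt{s})$ in such a decomposition, one can take $c_2 = \sqrt{c_1}$, so $c_1/c_2^2 = 1$ and $c_1/c_2 = \sqrt{c_1}$. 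Therefore $\sum_{P \text{ at depth }\ell}|\partial P|^2 = \cO(n)$, and summing over $L$ levels yields $\cO(n\log n)$, while for the linear bound the geometric series $\sum_{\ell=0}^{L} \sqrt{n}\cdot\sqrt{c_1}^{\,\ell}$ is dominated by its final term $\sqrt{n}\cdot \sqrt{c_1}^{\,\log_{c_1} n}=\sqrt{n}\cdot\sqrt{n}=\cO(n)$, giving $\sum_{P\in\TG}|\partial P| = \cO(n)$.

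The main obstacle is only a bit of bookkeeping: verifying that the per-level overlap of pieces is $\cO(1)$ and that we may take $c_2 = \sqrt{c_1}$. Both are standard consequences of Miller's cycle separator~\cite{DBLP:conf/stoc/Miller84} together with the explicit structure of the decomposition tree from~\cite{DBLP:conf/stoc/KleinMS13}, and no new ideas are required beyond that.
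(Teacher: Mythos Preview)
The paper does not actually prove this proposition; it simply cites~\cite{Vorexact} for the three bounds. So there is no in-paper argument to compare against, and your level-by-level strategy is indeed the standard one.

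There is, however, a genuine gap in your counting. You bound the number of pieces at depth $\ell$ by $\cO(c_1^\ell)$ on the grounds that each piece has size $\Theta(n/c_1^\ell)$, but \cref{lem:rdiv} gives only the upper bound $\cO(n/c_1^\ell)$. In a binary decomposition with $1/3$--$2/3$ balanced separators, sizes at depth $\ell$ range from $(1/3)^\ell n$ to $(2/3)^\ell n$, so $c_1=3/2$ while the number of pieces is $2^\ell$, not $(3/2)^\ell$. Plugging $2^\ell$ pieces into your per-piece boundary bound $\cO(\sqrt{n}/\sqrt{c_1}^{\,\ell})$ gives a level sum of $\cO(\sqrt{n}\,(2/\sqrt{c_1})^\ell)$, which summed to depth $L=\Theta(\log_{c_1} n)$ is $n^{\log_{c_1} 2}$ --- superlinear whenever $c_1<2$. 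The same overcounting breaks your $|\partial P|^2$ computation.

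The fix is to drop the per-piece-times-count approach and use the aggregate fact you already stated: pieces at a fixed depth are edge-disjoint, so $\sum_i |P_i^\ell| = \cO(n)$. Granting $|\partial P|=\cO(\sqrt{|P|})$ (which is the substantive input from~\cite{DBLP:conf/stoc/KleinMS13}, not a consequence of \cref{lem:rdiv}), the third bound is immediate: $\sum_i |\partial P_i^\ell|^2 = \cO\big(\sum_i |P_i^\ell|\big) = \cO(n)$ per level, hence $\cO(n\log n)$ overall. For the second bound, Cauchy--Schwarz gives $\sum_i \sqrt{|P_i^\ell|} \le \sqrt{2^\ell \cdot \cO(n)}$, and since $2^L=\cO(n)$ the geometric sum over levels is $\cO(n)$.
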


We show the following bound that will be used in future proofs.

\begin{proposition}\label{prop:2}
$\sum \limits_{P \in \TG} |P| |\partial P|^2 = \cO(n^2)$.
\end{proposition}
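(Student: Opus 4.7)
The plan is to stratify the sum over the levels of $\TG$ and plug in the bounds from \cref{lem:rdiv}. Let $\mathcal{P}_\ell$ denote the set of pieces at depth $\ell$ in $\TG$. By \cref{lem:rdiv}(ii), every $P \in \mathcal{P}_\ell$ has $|P| = \cO(n/c_1^\ell)$, and by \cref{lem:rdiv}(iii), $|\partial P| = \cO(\sqrt{n}/c_2^\ell)$. First I would establish that $|\mathcal{P}_\ell| = \cO(c_1^\ell)$. This is standard: pieces at the same depth are (essentially) internally vertex-disjoint subgraphs whose union covers $G$, so their total size is $\cO(n)$; dividing by the per-piece size $\cO(n/c_1^\ell)$ yields the claimed count. (Equivalently, $\TG$ is a binary tree and the size bound (ii) forces the depth to grow logarithmically with $c_1$ at the branching rate.)

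Combining these three bounds level-by-level gives
\[
\sum_{P \in \mathcal{P}_\ell} |P|\,|\partial P|^2 \;=\; \cO(c_1^\ell) \cdot \cO\!\left(\tfrac{n}{c_1^\ell}\right) \cdot \cO\!\left(\tfrac{n}{c_2^{2\ell}}\right) \;=\; \cO\!\left(\tfrac{n^2}{c_2^{2\ell}}\right).
\]
Summing over all depths $\ell \ge 0$ and using that $c_2 > 1$, the geometric series $\sum_{\ell} c_2^{-2\ell}$ converges to a constant, so
\[
\sum_{P \in \TG} |P|\,|\partial P|^2 \;=\; \sum_{\ell \ge 0} \cO\!\left(\tfrac{n^2}{c_2^{2\ell}}\right) \;=\; \cO(n^2),
\]
as desired.

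I do not anticipate any serious obstacle. The only mildly delicate point is the bound $|\mathcal{P}_\ell| = \cO(c_1^\ell)$: boundaries are shared across sibling pieces, so one should argue via internal vertices (which are disjoint) rather than vertex sets overall, but this is a routine observation for the decomposition of \cite{DBLP:conf/stoc/KleinMS13} underlying \cref{lem:rdiv}. An alternative phrasing is to note that the binary tree $\TG$ has at most $2^\ell$ nodes at depth $\ell$, which together with (ii) and (iii) yields the same geometric bound level-by-level whenever $c_1 \le 2$; otherwise one uses the partition argument above.
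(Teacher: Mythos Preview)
Your approach is the same as the paper's: stratify by depth, bound the per-level contribution by $\cO(n^2/c_2^{2\ell})$, and sum the resulting geometric series. However, your detour through bounding $|\mathcal{P}_\ell|$ is both unnecessary and contains a logical slip: from ``total size $\cO(n)$'' and ``each piece has size $\cO(n/c_1^\ell)$'' (an \emph{upper} bound on individual sizes) you cannot deduce $|\mathcal{P}_\ell| = \cO(c_1^\ell)$; that inference would require a \emph{lower} bound on individual piece sizes. (Your fallback via $2^\ell$ also has the condition reversed: for $2^\ell \cdot \cO(n/c_1^\ell)$ to stay $\cO(n)$ you would need $c_1 \ge 2$, not $c_1 \le 2$.) The paper sidesteps the count entirely using the very fact you already stated as a premise: pieces at a fixed level are edge-disjoint, so $\sum_{P \in \mathcal{P}_\ell} |P| = \cO(n)$. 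Then
\[
\sum_{P \in \mathcal{P}_\ell} |P|\,|\partial P|^2 \;\le\; \Bigl(\max_{P \in \mathcal{P}_\ell} |\partial P|^2\Bigr) \sum_{P \in \mathcal{P}_\ell} |P| \;=\; \cO\!\left(\frac{n}{c_2^{2\ell}}\right)\cdot \cO(n),
\]
and the rest is exactly as you wrote. So the proof is essentially correct once you drop the piece-count step and use $\sum_P |P| = \cO(n)$ directly.
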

\begin{proof}
Let $P^\ell_1,P^\ell_2,\ldots , P^\ell_j$ be the pieces at the $\ell$-th level of the decomposition.
$\sum_i |P^\ell_i|=\cO(n)$ since the pieces are edge-disjoint. We know by~\cref{lem:rdiv} that $|\partial P^\ell_j| = \cO(\sqrt{n}/c^\ell_2)$ for all $j$ and hence $|\partial P^\ell_j|^2 = \cO(n/c^{2\ell}_2)$ for all $j$.
It follows that $\sum_i |P^\ell_i| |\partial P^\ell_i|^2 = \cO(n^2/c^{2\ell}_2)$ and the claimed bound follows by summing over all levels of $\TG$.
\end{proof}

\paragraph{Dense distance graphs and FR-Dijkstra.}
The \emph{dense distance graph} of a piece $P$, denoted 
$DDG_P$ is a complete directed graph on the boundary vertices of $P$.
Each edge $(u,v)$ has weight $d_{P}(u,v)$, equal to the length of the shortest $u$-to-$v$ path in $P$.
$DDG_P$ can be computed in time $\cO((|\partial P|^2 + |P|) \log |P|)$ using the multiple source shortest paths (MSSP) algorithm~\cite{MSSP,DBLP:journals/siamcomp/CabelloCE13}.
Over all pieces of the recursive decomposition this takes time $\cO(n \log^2 n)$ in total and requires space $\cO(n \log n)$ by~\cref{prop:1}.
We next give a ---convenient for our purposes--- interface for FR-Dijkstra~\cite{FR}, which is an efficient implementation of Dijkstra's algorithm on any union of $DDG$s.
The algorithm exploits the fact that, due to planarity, certain submatrices of the adjacency matrix of $DDG_P$ satisfy the Monge property.
(A matrix $M$ satisfies the Monge property if, for all $i<i'$ and $j<j'$, $M_{i,j}+M_{i',j'} \leq M_{i',j}+M_{i,j'}$~\cite{monge1781memoire}.) The interface is specified in the following theorem, which was essentially proved in~\cite{FR}, with some additional components and details from~\cite{DBLP:journals/talg/KaplanMNS17,DBLP:conf/esa/MozesW10}.

\begin{theorem}[\cite{FR,DBLP:journals/talg/KaplanMNS17,DBLP:conf/esa/MozesW10}]\label{thm:FR}
A set of $DDG$s with $\cO(M)$ vertices in total (with multiplicities), each having at most $m$ vertices, can be preprocessed in time and extra space $\cO(M \log m)$ in total, so that, after this preprocessing, Dijkstra's algorithm can be run on the union of any subset of these $DDG$s with $\cO(N)$ vertices in total (with multiplicities) in time $\cO(N \log N \log m)$, by relaxing edges in batches. Each such batch consists of edges that have the same tail.
\end{theorem}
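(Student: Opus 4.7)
The plan is to exploit the classical fact, going back to~\cite{FR}, that the adjacency matrix of the dense distance graph of a piece $P$ with $\cO(1)$ holes decomposes into $\cO(1)$ Monge submatrices. First I would establish the structural statement: if rows and columns of such a submatrix are indexed by boundary vertices ordered cyclically along their respective holes (breaking the cyclic order on a single hole into $\cO(1)$ linear segments), then the matrix is Monge. This is a direct consequence of the non-crossing property of shortest paths in plane graphs. Over all input DDGs, this yields a collection of Monge blocks whose aggregate row-plus-column count is $\cO(M)$.

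Next, I would preprocess each Monge block $B$ into a data structure supporting the three operations that batch-relaxation Dijkstra requires: (i) \emph{activate} a row $u$ with a label $\ell_u$, so that the effective value offered at column $j$ becomes the running minimum over $\ell_u + B[u,j]$; (ii) \emph{report} the minimum effective value across all surviving columns together with its witness; (iii) \emph{delete} a column. The Monge property implies that at every moment the surviving columns partition, according to the winning active row, into contiguous intervals; I would maintain this partition in a balanced BST of breakpoints. Inserting a new row locates its winning interval by a binary search in $\cO(\log m)$ time, and by a telescoping argument, any row ever completely evicted from the partition never returns, so the total cost of all activations and deletions on a block with $a$ rows and $b$ columns is $\cO((a+b)\log m)$. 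Entries of $B$ are fetched on demand from the MSSP structure of~\cite{MSSP} in $\cO(\log m)$ time, which both justifies the extra $\log m$ factor and keeps the total preprocessing space at $\cO(M\log m)$ rather than proportional to the quadratic number of matrix entries.

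Finally, I would wrap the block data structures into a Dijkstra loop. A meta-priority-queue holds, for each block in the chosen subset, its current minimum; extracting the global minimum yields the next vertex $v$ to settle with distance $d(v)$. Settling $v$ consists of deleting $v$ as a column from every block in which it appears and activating $v$ as a row with label $d(v)$ in every block in which it appears, after which each touched block refreshes its contribution to the meta-queue. Since each vertex participates in $\cO(1)$ blocks per DDG it belongs to, the total number of data-structure operations is $\cO(N)$, each costing $\cO(\log m)$ internally and $\cO(\log N)$ in the meta-heap, for a grand total of $\cO(N\log N\log m)$.

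The main obstacle is obtaining the $\cO(\log m)$ amortized cost for row activation, which in the worst case may dethrone many previously winning rows on a long contiguous column range. The Monge staircase property is precisely what rescues this---a dethroned row never comes back---but correctly combining the interval BST, the on-demand MSSP lookups, and the meta-priority-queue (in particular so that column deletions and row insertions propagate consistently) is the delicate part and is where the refinements of~\cite{DBLP:journals/talg/KaplanMNS17,DBLP:conf/esa/MozesW10} are needed on top of the original argument of~\cite{FR}.
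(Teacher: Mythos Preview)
The paper does not prove this theorem; it merely states it as an interface and attributes it to \cite{FR,DBLP:journals/talg/KaplanMNS17,DBLP:conf/esa/MozesW10}. There is therefore no ``paper's own proof'' to compare against. Your sketch is an accurate outline of how the cited works actually establish the result: decomposing each $DDG$ into $\cO(1)$ Monge blocks via the non-crossing property of shortest paths, maintaining per-block a staircase of contiguous ``winning'' column intervals so that row activations amortize to $\cO(\log m)$ each, and coordinating the blocks with a global heap.

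One small inaccuracy: you invoke the MSSP data structure of \cite{MSSP} to fetch individual matrix entries on demand and use this to justify the $\cO(M\log m)$ space bound. In the setting of the theorem as stated here, the $DDG$s are assumed already given (the paper computes and stores them explicitly, with total size $\cO(\sum_P |\partial P|^2)$), and the $\cO(M\log m)$ refers to \emph{extra} preprocessing space on top of that. The per-block data structures of \cite{DBLP:journals/talg/KaplanMNS17,DBLP:conf/esa/MozesW10} access stored entries in $\cO(1)$ time; the $\log m$ factor arises from the Monge-heap operations and the range-minimum machinery, not from MSSP lookups. This does not affect the correctness of your overall argument, only the attribution of where the logarithmic factor and the space bound come from.
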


The algorithm in the above theorem is called FR-Dijkstra. It is useful in computing distances in sublinear time, as demonstrated by~\cref{lem:frcone} and~\cref{cor:fr} which are a reformulation of ideas from~\cite{FR} and are provided below for completeness.

\begin{definition}
The \emph{cone} of a vertex $u$ of $G$ is the union of the following DDGs: (i) $DDG_{P_u}$, with $u$ considered a boundary vertex of $P_u$. (ii) For every (not necessarily strict) ancestor $P$ of $P_u$, $DDG_Q$ of the sibling $Q$ of $P$.
\end{definition}

\begin{lemma}\label{lem:frcone}
Let $x$ and $y$ be two vertices in the cone of a vertex $u$. The $x$-to-$y$ distance in $G$ equals the $x$-to-$y$ distance in this cone of $u$.
\end{lemma}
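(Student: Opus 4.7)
The plan is to prove the two inequalities $d_{\cone}(x,y) \ge d_G(x,y)$ and $d_{\cone}(x,y) \le d_G(x,y)$ separately, after first establishing the following structural property of the cone. Let $P_u$ together with $Q_0, Q_1, \ldots, Q_{L-1}$ be the pieces whose DDGs comprise the cone, where $Q_i$ is the sibling in $\TG$ of the $i$-th proper ancestor of $P_u$. Since at every internal node $T \in \TG$ with children $T_1, T_2$ we have $E(T)=E(T_1)\sqcup E(T_2)$, iterating this identity along the root-to-$P_u$ path shows that $E(P_u), E(Q_0), \ldots, E(Q_{L-1})$ form a partition of $E(G)$.

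The direction $d_G(x,y) \le d_{\cone}(x,y)$ is immediate: every edge $(a,b)$ of a constituent $DDG_P$ has weight $d_P(a,b)$, witnessed by a genuine $a$-to-$b$ path in $P \subseteq G$. Concatenating such paths along any $x$-to-$y$ walk in the cone produces a walk in $G$ of identical length.

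For the reverse direction, I would take a shortest $x$-to-$y$ path $\pi$ in $G$ and decompose it as $\pi = \pi_1 \pi_2 \cdots \pi_t$ into maximal consecutive runs, each lying entirely in a single cone piece $P_j$. Writing $v_0 = x$, $v_t = y$, and $v_j$ for the shared endpoint of $\pi_j$ and $\pi_{j+1}$, each transition vertex $v_j$ belongs to two distinct cone pieces. Letting $A$ be the least common ancestor of $P_j$ and $P_{j+1}$ in $\TG$ with children $A_1, A_2$, the vertex $v_j$ lies in both $A_1$ and $A_2$ and hence on the separator that splits $A$; consequently $v_j$ is a boundary vertex of every descendant piece containing it, in particular of both $P_j$ and $P_{j+1}$, so $v_j$ is a vertex of $DDG_{P_j}$ and of $DDG_{P_{j+1}}$. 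Each sub-path $\pi_j$ from $v_{j-1}$ to $v_j$ is a shortest $v_{j-1}$-to-$v_j$ path in $P_j$: as a sub-path of a globally shortest path in the supergraph $G \supseteq P_j$, no shorter alternative in $P_j$ is possible. Hence its length equals $d_{P_j}(v_{j-1}, v_j)$, the weight of the corresponding edge in $DDG_{P_j}$, and chaining these edges in the cone yields an $x$-to-$y$ walk of total length $|\pi| = d_G(x,y)$.

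The main delicacy is handling the endpoints: I must check that $x$ appears as a vertex of $DDG_{P_1}$ and $y$ as a vertex of $DDG_{P_t}$. If $x$ is already a boundary vertex of some other cone piece $P'$, then since $x$ also lies in $P_1$ (it is the tail of the first edge of $\pi$), the same LCA argument applied to $P_1$ and $P'$ places $x$ on a separator, so $x \in \partial P_1$. The only case requiring the cone's special convention is $x = u$ with $P_1 = P_u$, which is resolved by the definition's explicit stipulation that $u$ is treated as a boundary vertex of $P_u$; symmetric reasoning applies to $y$.
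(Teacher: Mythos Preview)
Your proof is correct. The organization differs from the paper's: the paper argues by induction along the root-to-$P_u$ path, showing for each level $i$ that distances in the partial cone $\cone_i = DDG_{P_u}\cup\bigcup_{j<i}DDG_{Q_j}$ agree with distances in the ancestor piece $P_i$; at the inductive step one only ever splits a shortest path in $P_{k+1}$ into maximal subpaths lying in its two children $P_k$ and $Q_k$, so the transition points are automatically on the separator $\partial P_k\cap\partial Q_k$ and no further argument is needed. You instead globalize the decomposition in one shot, using the edge-partition $E(G)=E(P_u)\sqcup\bigsqcup_i E(Q_i)$ and an LCA argument to certify that each transition vertex is a boundary vertex of the two adjacent cone pieces. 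Both approaches rest on the same structural facts; the paper's induction localizes the separator reasoning to one level at a time, while your direct argument is perhaps more transparent but pays for it with the explicit LCA step and the endpoint case analysis (which you handle correctly, including the case $x=u$ with $P_1\neq P_u$, since $u\in V(P_u)\cap V(P_1)$ still feeds the LCA argument).
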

\begin{proof}
Let $P_u = P_0, P_1, \dots, P_d = G$ be the ancestors of $P_u$ ordered by decreasing depth in $\TG$.
Let $Q_i$ be the sibling of $P_i$ in $\TG$.
Let $\cone_i$ be $DDG_{P_u} \cup \bigcup_{j<i} DDG_{Q_j}$.
We will prove by induction that for any two vertices $x,y \in \cone_i$, the $x$-to-$y$ distance in $P_i$ equals the $x$-to-$y$ distance in $\cone_i$.
This statement is trivially true for $i=0$.
Let us assume it is true for $k$.
Consider an $x$-to-$y$ shortest path $p$ in $P_{k+1}$, where $x,y \in \cone_{k+1}$. 
Path $p$ can be decomposed into maximal subpaths that are entirely contained in $P_k$ or $Q_k$ and whose endpoints are in $\{x,y\} \cup (\partial P_k \cap \partial Q_k)$. For each such subpath we either have a path with the same length in $\cone_k$ by the inductive assumption,  or an edge of $DDG_{Q_k}$. This shows that the length of $p$ is at least the length of the $x$-to-$y$ distance in $\cone_k$. Since every edge of $\cone_k$ corresponds to some path in $P_k$, the opposite also holds, so the two quantities are equal.
\end{proof}

\begin{corollary}\label{cor:fr}
Let $u,v$ be two distinct vertices in $G$. Let $p$ be a shortest $u$-to-$v$ path in $G$. If $p$ is not fully contained in $P_u$ then we can compute the length of $p$ by running FR-Dijkstra on the union of the cone of $u$ and the cone of $v$. This takes $\cOtilde(\sqrt n)$ time.
\end{corollary}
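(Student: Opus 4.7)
The plan is to show that $d_Z(u,v) = d_G(u,v)$ for $Z = \cone(u) \cup \cone(v)$; the corollary then follows by running FR-Dijkstra on $Z$ via \cref{thm:FR} and reading off the distance to $v$. One direction, $d_Z(u,v) \geq d_G(u,v)$, is immediate since every edge of any $DDG_P$ in $Z$ is weighted by a true $G$-distance between its endpoints. For the reverse inequality, I would exhibit a vertex $w$ on the shortest $u$-to-$v$ path $p$ with $w \in \cone(u) \cap \cone(v)$, then apply \cref{lem:frcone} twice: once in $\cone(u)$ to get $d_{\cone(u)}(u,w) = d_G(u,w)$, and once in $\cone(v)$ to get $d_{\cone(v)}(w,v) = d_G(w,v)$. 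Because $w$ lies on $p$, we have $d_G(u,w) + d_G(w,v) = |p|$, so $d_Z(u,v) \leq d_G(u,v)$.

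Finding such a $w$ is the heart of the argument. I would use the fact that the pieces $P_u, Q_0, \ldots, Q_{d-1}$ whose DDGs form $\cone(u)$ edge-partition $G$ (since $P_i \cup Q_i = P_{i+1}$ at every level). Let $X$ be the unique $\cone(u)$-piece whose $\TG$-subtree contains $P_v$: namely $P_u$ itself if $P_u = P_v$, and otherwise the sibling $Q_{j^*}$ of the ancestor of $P_u$ at the child level of $P^* := \mathrm{LCA}(P_u,P_v)$. I would then select the \emph{last} edge $e^*$ along $p$ that lies outside $X$; the hypothesis that $p$ is not fully contained in $P_u$ guarantees that such an $e^*$ exists (the degenerate sub-case where every edge of $p$ is in $X$ is handled by choosing $w = u$). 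Let $w$ be the head of $e^*$: the next edge of $p$ is in $X$, so $w \in X$; and $e^*$ is in a distinct $\cone(u)$-piece $X'$, so $w \in X \cap X'$. Hence $w$ is a boundary vertex of the ``higher'' (closer to the root of $\TG$) of the two pieces, placing $w \in \cone(u)$.

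The step I expect to require the most care is verifying $w \in \cone(v)$. The key observation is that $X$ is either $P_v$ or a proper ancestor of $P_v$ in $\TG$, so $\partial X$ is captured in $\cone(v)$, either by $DDG_{P_v}$ or by $DDG_{Q'}$ for the $\cone(v)$-sibling $Q'$ of $X$. When instead $w \in \partial X'$ for some $X' \neq X$ that is higher than $X$ in $\TG$, $X'$ must lie strictly above $P^*$ on the ancestor chain, and there $u$'s chain and $v$'s chain coincide; hence $DDG_{X'}$ is already a member of $\cone(v)$, so $w \in \cone(v)$ automatically. The case $P_u = P_v$, together with the hypothesis that $p$ leaves $P_u$, is handled analogously by taking $w$ on $\partial P_u$, which sits in both cones by construction.

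For the running time, \cref{lem:rdiv} gives $|\partial Q_j| = \cO(\sqrt n / c_2^{\,d-j})$ and $|\partial P_u| = \cO(\sqrt n / c_2^{\,d})$, so the total vertex count of $\cone(u)$ telescopes, via a geometric series, to $\cO(\sqrt n)$; the same holds for $\cone(v)$. Therefore $Z$ has $\cO(\sqrt n)$ vertices, and \cref{thm:FR} applied to $Z$ yields the claimed $\cOtilde(\sqrt n)$ query time.
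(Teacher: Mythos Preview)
Your proposal is correct and follows essentially the same strategy as the paper: exhibit a vertex $w$ on $p$ lying in both $\cone(u)$ and $\cone(v)$, invoke \cref{lem:frcone} on each cone separately, and bound the total cone size by a geometric series via \cref{lem:rdiv}. The paper is much terser---it simply takes $w$ to be a vertex on the separator of the LCA of $P_u$ and $P_v$---whereas you locate $w$ as the head of the last edge of $p$ leaving the cone-$u$ piece $X$ that contains $P_v$; your $w$ typically coincides with (or lies on the boundary of an ancestor of) the paper's choice, and your case analysis handles the $P_u = P_v$ corner case and the ``$X'$ above $P^*$'' situation more explicitly than the paper does.
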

\begin{proof}
Since $p$ is not fully contained in $P_u$, $p$ must visit a vertex $w$ in the separator of the LCA of $P_u$ and $P_v$ in $\TG$. We are done by decomposing $p$ into the prefix ending at $w$ and the suffix beginning at $w$, and applying Lemma~\ref{lem:frcone}. The running time follows by Theorem~\ref{thm:FR} and Lemma~\ref{lem:rdiv}.
\end{proof}

\section{Near Linear Space Data Structure for any Number of Failures}\label{sec:FR}

In this section we show how to adapt the approach of~\cite{FR} for distance oracles supporting cumulative edge changes to support distance queries with failed vertices.
The main technical challenge lies in dealing with failures of high-degree vertices, since such vertices may belong to many pieces at each level of the decomposition. For example, think of a failure of the central vertex in a wheel graph, which belongs to all the pieces in the recursive decomposition. Note that standard degree reduction techniques such as vertex splitting are not useful because when a vertex fails all its copies fail.
This is in contrast with  the situation when dealing only with edge-weight updates, since each edge can be in at most one piece per level.
We circumvent this by defining and employing the {\em strictly internal dense distance graph}.
The main intuition is that strictly internal DDGs enable us to handle pieces that only contain failed boundary vertices, i.e.~do not contain any internal vertex that fails. Then, only pieces that contain internal failed vertices are ``problematic". Note however, that a vertex is internal in at most one piece per level of the decomposition.

\begin{definition}
The \emph{strictly internal} dense distance graph of a piece $P$, denoted $DDG^\circ_P$, is a complete directed graph on the boundary vertices of $P$.
An edge $(u,v)$ has weight $d^\circ_{P}(u,v)$ equal to the length of the shortest $u$-to-$v$ path in $P$ that is internally disjoint from $\partial P$.
\end{definition}

The sole difference to the standard definition is that in our case paths are not allowed to go through $\partial P$.
Observe that the shortest path in $P$ between two vertices of $\partial P$ is still represented in $DDG^\circ_P$, just not necessarily by a single edge as in $DDG_P$.
This establishes the following lemma.
\begin{lemma}
For any piece $P$ and any two boundary vertices $u,v \in \partial P$, the $u$-to-$v$ distance in $DDG^\circ_P$ equals the $u$-to-$v$ distance in $DDG_P$.
\end{lemma}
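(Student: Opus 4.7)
The plan is to prove equality by showing both inequalities, using that every edge weight in either distance graph corresponds to the length of some genuine path in $P$ with the stated restriction.

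First I would observe that the $u$-to-$v$ distance in $DDG_P$ is exactly $d_P(u,v)$: the single edge $(u,v)$ has this weight, and no multi-edge path in $DDG_P$ can be shorter, because concatenating the paths in $P$ realizing each of its edges yields a $u$-to-$v$ walk in $P$ of the same total length, hence of length at least $d_P(u,v)$. The same ``concatenation'' argument, applied now to $DDG^\circ_P$, shows that the $u$-to-$v$ distance in $DDG^\circ_P$ is at least $d_P(u,v)$, since each edge still corresponds to a real (boundary-avoiding) path in $P$.

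For the reverse inequality, I would take a shortest $u$-to-$v$ path $\pi$ in $P$ and list the boundary vertices it visits in order, $u = b_0, b_1, \ldots, b_k = v$. By maximality of this list, the subpath $\pi_i$ of $\pi$ from $b_i$ to $b_{i+1}$ is internally disjoint from $\partial P$, so its length is at least $d^\circ_P(b_i,b_{i+1})$, the weight of the edge $(b_i,b_{i+1})$ in $DDG^\circ_P$. Summing, the $u$-to-$v$ distance in $DDG^\circ_P$ is at most $\sum_{i=0}^{k-1} d^\circ_P(b_i,b_{i+1}) \le \sum_{i=0}^{k-1} \mathrm{len}(\pi_i) = d_P(u,v)$. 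Combining this with the previous paragraph gives equality.

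I do not anticipate a real obstacle here; the only subtle point is making sure the decomposition of $\pi$ at boundary vertices is exhaustive so that every subpath $\pi_i$ is genuinely internally disjoint from $\partial P$, which is immediate from how the $b_i$'s are defined. The argument does not rely on planarity or on the structure of $\TG$ and applies to any piece.
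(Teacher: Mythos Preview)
Your proof is correct and follows essentially the same idea as the paper: the paper simply observes that a shortest $u$-to-$v$ path in $P$ is still represented in $DDG^\circ_P$ (just not necessarily by a single edge) and declares this to establish the lemma, whereas you spell out both inequalities explicitly via the boundary-vertex decomposition and the concatenation argument.
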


\begin{comment}
The Monge property is exploited by FR-Dijkstra, in efficiently computing the below representation of $DDG_P$.
For each pair of holes $H_1$ and $H_2$ of $P$, the distances from the boundary vertices of $H_1$ to those of $H_2$ in $P$ are represented by two Monge matrices $M^{\ell}$ and $M^r$, such that $d_P(u,v)=\min (M^{\ell}(u,v),M^{r}(u,v))$ along with the required so called Monge heaps and range minima data structures~\cite{DBLP:conf/esa/MozesW10,DBLP:journals/talg/KaplanMNS17}.
The total size of this data structure for $P$ is $\cO(|P|)$ and it can be computed in time $\cO(|P|\log |P|)$; the total time and space complexity for all pieces of $\TG$ are $\cO(n \log^2 n)$ and $\cO(n \log n)$ respectively~\cite{FR,DBLP:conf/esa/MozesW10,DBLP:journals/talg/KaplanMNS17}.
\end{comment}

We now discuss how to efficiently compute $DDG^\circ_P$.
We construct a planar graph $\hat{P}$, by creating a copy of $P$ and incrementing the weight of each edge $uv$, such that $u \in \partial P$, by $C=2\sum_{e \in E(G)}|w(e)|$.
$DDG_{\hat{P}}$ can be computed in $\cO((|\partial P|^2 + |P|) \log |P|)$ time using MSSP~\cite{MSSP,DBLP:journals/siamcomp/CabelloCE13}.
Observe that any $u$-to-$v$ path in $\hat{P}$ that starts at $\partial \hat P$ and is internally disjoint from $\partial \hat{P}$ has exactly one edge $uw$ with $u \in \partial P$, so its length is at least $C$ and less than $2C$, while any $u$-to-$v$ path that has an internal vertex in $\partial P$ is of length at least $2C$.
Therefore, the $u$-to-$v$ distance in $\hat P$ is equal to $C$ plus the length of the shortest $u$-to-$v$ path in $P$ that is internally disjoint from $\partial P$ if the latter one is not $\infty$.
We thus set $d^\circ_{P}(u,v)=d_{\hat{P}}(u,v)-C$.
This completes the description of the computation of $DDG^\circ_P$.
Note that since $C$ is defined in terms of $G$ rather than $P$, edge weights greater than $C$ in $DDG^\circ_P$ effectively represent infinite length in the sense that such edges will never be used by any shortest path (in $P$ nor in $G$). 
Also note that it follows directly from the definition of the Monge property that subtracting $C$ from each entry of a Monge matrix preserves the Monge property. Therefore, we can use $\bigcup_P DDG^\circ_P$ in FR-Dijkstra (Theorem~\ref{thm:FR}) instead of $\bigcup_P DDG_P$.

\begin{comment}
We next prove a straightforward lemma for any implementation of Dijkstra's algorithm (such as FR-Dijkstra) and allows us to use $DDG_{\hat{P}}$'s as input to it.

\begin{lemma}
Given a graph $G$ with positive edge weights and $C>0$, let $G_C$ be the graph obtained by incrementing the weight of each edge of $G$ by $C$. If we run Dijkstra from a source $s$ in $G_C$, subtracting $C$ from each edge just before relaxing it, we obtain the shortest path tree rooted at $s$ in $G$.
\end{lemma}
\begin{proof}
We can assume without loss of generality that shortest paths are unique.
By the fact that we subtract $C$ from edges of $G_C$ before relaxing them, it suffices to show that edges are relaxed in the same order as if we were running Dijkstra on $G$.
Assume that this is true for the first $k \geq 0$ relaxations.
At this point, the two algorithms will have built exactly the same portion of the shortest path tree rooted at $s$ in $G$.
Denote the nodes in this portion of the shortest path tree by $S$.
The edge $(u,v)$ that gets relaxed next is the one minimising $d(s,u)+w(u,v)$ for $u \in S$, $v \notin S$.
It is easy to see that since the distances from $s$ to nodes of $S$ computed in the two graphs are the same and that edge-weights are offset by the same amount, the two algorithms will relax the same edge at the $(k+1)$-th relaxation, given the uniqueness of shortest paths. We are thus done by induction.
\end{proof}

Henceforth, when we refer to $DDG^\circ_P$, we essentially refer to its representation using $DDG_{\hat{P}}$.
\end{comment}

\paragraph{Preprocessing.} We compute a complete recursive decomposition tree $\TG$ of $G$ in time $\cO(n)$ as discussed in~\cref{sec:prel}. 
We compute $DDG^\circ_P$ for each non-leaf piece $P \in \TG$ and preprocess it as in FR-Dijkstra.
By~\cref{prop:1}, Theorem~\ref{thm:FR} and the above discussion, the time and space complexities are $\cO(n \log^2 n)$ and $\cO(n \log n)$, respectively.

\paragraph{Query.} Upon query $(u,v,X)$, we run FR-Dijkstra on the union of the following $DDG^\circ$s, which we denote by $\mathcal{D}(u,v,X)$ or just $\mathcal{D}$ when the arguments are clear from the context (inspect~\cref{fig:siblings2} for an illustration):

\begin{figure*}[t]
\centering
\resizebox{\textwidth}{!}{ % fig2new3
\begin{tikzpicture}

\node[shape=circle, draw=white, label = {right: $P_u$}] (0) at (-4.3,-6) {};
\node[shape=circle, draw=white, label = {left: $P_v$}] (0) at (-7.7,-6) {};
\node[shape=circle, draw=white, label = {left: $P_x$}] (0) at (4.3,-6) {};

\fill [gray!40]  (-7,-6) ellipse (0.75cm and 0.5cm);
\fill [gray!40]  (-5,-6) ellipse (0.75cm and 0.5cm);
\fill [gray!40] (-1.5,-4) ellipse (1.25cm and 0.5cm);
\fill [gray!40] (-6,0) ellipse (2cm and 0.7cm);

\fill [red] (2.5,-2) ellipse (1.75cm and 0.6cm);

\fill [gray!85]  (7,-6) ellipse (0.75cm and 0.5cm);
\fill [gray!85]  (5,-6) ellipse (0.75cm and 0.5cm);
\fill [gray!85] (1.5,-4) ellipse (1.25cm and 0.5cm);
\fill [gray!85] (-2.5,-2) ellipse (1.75cm and 0.6cm);

\draw (-3,2) ellipse (2.3cm and 0.8cm);
\draw (-6,0) ellipse (2cm and 0.7cm);
  \draw[-latex, thick, shorten >= 0.2cm] (-2.5,1.2) -- (-1,0.43);
  \draw[-latex, thick, shorten >= 0.2cm] (-3.5,1.2) -- (-5,0.43);
\node[shape=circle,draw=white, label = {right: $G$}] (0) at (-0.7,2) {};
\node[shape=circle,draw=white, label = {right: $v$}] (0) at (-2.6,2.3) {};
\node[shape=circle,draw=white, label = {right: $u$}] (0) at (-3,2) {};
\node[shape=circle,draw=white, label = {right: $x$}] (0) at (-2.5,1.6) {};

\node[shape=circle,draw=white, label = {right: $v$}] (0) at (0.5,0.4) {};
\node[shape=circle,draw=white, label = {right: $u$}] (0) at (-0.8,0) {};
\node[shape=circle,draw=white, label = {right: $x$}] (0) at (0.3,-0.3) {};

\draw (0,0) ellipse (2cm and 0.7cm);

\draw[color=black] (-2.5,-2) ellipse (1.75cm and 0.6cm);
  \draw[-latex, thick, shorten >= 0.2cm] (-0.45,-0.7) -- (-2,-1.55);
\node[shape=circle,draw=gray!85, label = {right: $u$}] (0) at (-3.3,-2.2) {};
\node[shape=circle,draw=gray!85, label = {right: $v$}] (0) at (-2.7,-2) {};

\draw[color=black] (2.5,-2) ellipse (1.75cm and 0.6cm);
  \draw[-latex, thick, shorten >= 0.2cm] (0.45,-0.7) -- (2,-1.55);
  \node[shape=circle,draw=red, label = {right: $x$}] (0) at (2.5,-2) {};

\draw[color=black] (-4.5,-4) ellipse (1.25cm and 0.5cm);
  \draw[-latex, thick, shorten >= 0.2cm] (-2.85,-2.6) -- (-4.3,-3.63);
\node[shape=circle, scale = 0.2, draw=black, fill, label = {right: $u$}] (0) at (-5.75,-4) {};
  \node[shape=circle,draw=white, label = {right: $v$}] (0) at (-5,-3.8) {};

\draw[color=black] (-1.5,-4) ellipse (1.25cm and 0.5cm);
  \draw[-latex, thick, shorten >= 0.2cm] (-2.15,-2.6) -- (-2.15,-3.75);
    \node[shape=circle, scale = 0.2, draw=black, fill, label = {right: $u$}] (0) at (-2.75,-4) {};

\draw (1.5,-4) ellipse (1.25cm and 0.5cm);
  \draw[-latex, thick, shorten >= 0.2cm] (2.15,-2.6) -- (2.15,-3.75);
      \node[shape=circle, scale = 0.2, draw=black, fill, label = {left: $x$}] (0) at (2.75,-4) {};

\draw[color=black] (4.5,-4) ellipse (1.25cm and 0.5cm);
  \draw[-latex, thick, shorten >= 0.2cm] (2.85,-2.6) -- (4.3,-3.63);
      \node[shape=circle, scale = 0.2, draw=black, fill, label = {right: $x$}] (0) at (3.25,-4) {};

\draw[color=black] (-7,-6) ellipse (0.75cm and 0.5cm);
  \draw[-latex, thick, shorten >= 0.2cm] (-4.5,-4.5) -- (-6.8,-5.7);
\node[shape=circle, scale = 0.2, draw=black, fill, label = {right: $v$}] (0) at (-7.75,-6) {};

\draw[color=black] (-5,-6) ellipse (0.75cm and 0.5cm);
  \draw[-latex, thick, shorten >= 0.2cm] (-4.5,-4.5) -- (-5,-5.7);
        \node[shape=circle, scale = 0.2, draw=black, fill, label = {left: $u$}] (0) at (-4.25,-6) {};
     \node[shape=circle, scale = 0.2, draw=black, fill, label = {right: $v$}] (0) at (-5.75,-6) {};

\draw[color=black] (5,-6) ellipse (0.75cm and 0.5cm);
  \draw[-latex, thick, shorten >= 0.2cm] (4.5,-4.5) -- (5,-5.7);
      \node[shape=circle, scale = 0.2, draw=black, fill, label = {right: $x$}] (0) at (4.25,-6) {};

\draw (7,-6) ellipse (0.75cm and 0.5cm);
  \draw[-latex, thick, shorten >= 0.2cm] (4.5,-4.5) -- (6.8,-5.7);
     \node[shape=circle, scale = 0.2, draw=black, fill, label = {right: $x$}] (0) at (6.25,-6) {};

\end{tikzpicture}
}
\caption{A portion of the complete recursive decomposition tree $\TG$ of a graph $G$.
The light gray and red pieces are the ones that would be considered by the failure-free distance oracle upon query $d(u,v)$.
However, given the failure of vertex $x$, the $DDG^\circ$ of the red piece is invalid.
The dark gray pieces are the ones that our algorithm considers instead of the red piece.
The $DDG^\circ$s of the dark gray pieces, that are descendants of the red piece, allow us to represent its $DDG^\circ$ subject to the failure of $x$.}
\label{fig:siblings2}
\end{figure*}

\begin{enumerate}
\item For each $w \in \{u,v\}$,  $DDG^\circ_{P_w}$ of $P_w \setminus X$ with $w$ regarded as a boundary vertex. This can be computed on the fly in constant time since the size of the leaf-piece $P_w$ is constant.
\item For each $w \in \{u,v\}$, for each ancestor $P$ of $P_w$ (including $P_w$), $DDG^\circ_Q$ of the sibling $Q$ of $P$ if $Q$ does not contain any internal (i.e.~non-boundary) vertex that is in $X$.
\item For each $x \in X$, $DDG^\circ_{P_x}$ of $P_x \setminus X$. This can be computed on the fly in constant time since the size of the leaf-piece $P_x$ is constant.
\item For each $x \in X$, for each ancestor $P$ of $P_x$ (including $P_x$), $DDG^\circ_Q$ of the sibling $Q$ of $P$ if $Q$ does not contain any internal vertex that is in $X$.
\end{enumerate}

We can identify these $DDG^\circ$s in $\cO(k \log n)$ time by traversing the parent pointers from each $P_i$, for $i \in X$, and marking all the nodes that have an internal failed vertex.
We make one small but crucial change to FR-Dijkstra.
When running FR-Dijkstra, we do not relax edges whose tail is a failed vertex.
This guarantees that, although failed vertices might appear in the graph on which FR-Dijkstra is invoked, the $u$-to-$v$ shortest path computed by FR-Dijkstra does not contain any failed vertices. We therefore obtain the following lemma.

\begin{lemma}\label{lem:fr}
There exists a data structure of size $\cO(n \log n)$, which can be constructed in $\cO(n \log^2 n)$ time, and answers the following queries in $\cO(\sqrt{kn}\log^2 n)$ time. Given vertices $u$ and $v$, and a set $X$ of $k$ failed vertices, report the length of a shortest $u$-to-$v$ path that avoids the vertices of $X$. 
\end{lemma}
\begin{proof}
We have already discussed the space occupied by the oracle and the time required to build it. It remains to analyze the query algorithm.

\emph{Correctness.} First, it is easy to see that no edge $(y,z)$ of any of the $DDG^
\circ$s in $\mathcal D$ represents a path containing a vertex $x \in X$, unless $\{y,z\} \cap X \neq \emptyset$.
The latter case does not affect the correctness of the algorithm, since in FR-Dijkstra we do not relax edges whose tail is a failed vertex.
Hence, the algorithm never computes a distance corresponding to a path going through a failed vertex.

It remains to show that the shortest path in $G\setminus X$ is represented in $\mathcal D$.
For this, by Corollary~\ref{cor:fr}, it suffices to prove that for each piece $A$ in the cone of $u$ (and similarly in the cone of $v$), either $DDG^\circ_A$ for $A\setminus X$ belongs to $\mathcal D$, or $\mathcal D$ contains enough information to reconstruct $DDG^\circ_A$ for $A \setminus X$ (i.e. subject to the failures) during FR-Dijkstra. In the latter case we say that $DDG^\circ_A$ is {\em represented} in $\mathcal D$.
Note that, for any piece $P$, $DDG^\circ_P$ is represented in $\mathcal D$ if  the $DDG^\circ$s of its two children in $\TG$ are represented in $\mathcal D$. (This follows by an argument identical to the one used in the proof of Lemma~\ref{lem:frcone}.)
If $A$ contains no internal failed vertex then $DDG^\circ_A$ is in $\mathcal D$ by point $1$ or $2$ above.
We next consider the case that $A$ does contain some failed vertex $x \in X$ as an internal vertex. Thus $A$ is an ancestor of $P_x$. To show that $A$ is represented in $\mathcal D$, we prove that for any failed vertex $y\in X$, the $DDG^\circ$ of any non-root ancestor of $P_y$ in $\TG$ is represented in $\mathcal D$.

We proceed by the minimal counterexample method. For any $y \in X$, $DDG^\circ_{P_y}$ is in $\mathcal D$ since it is computed on the fly in point $3$. Let $F$ be the deepest node in $\TG$ that is a strict ancestor of $P_y$ for some $y \in X$ and whose $DDG^\circ$ is not represented in $\mathcal D$.
It follows that one of $F$'s children must also be an ancestor of $P_y$ and by the choice of $F$ its $DDG^\circ$ is represented in $\mathcal D$. Let the other child of $F$ be $J$.
If $J$ is an ancestor of some $P_z$, $z \in X$, then $DDG_J^\circ$ is also represented in $\mathcal D$ by the choice of $F$.
Otherwise, $J$ does not contain any internal failed vertex, and hence $DDG^\circ_J$ is in $\mathcal D$ by point $4$.
In either case, the $DDG^\circ$s of both children of $F$ are represented in $\mathcal D$, so $DDG^\circ_F$ is also represented in $\mathcal D$, a contradiction.

\emph{Time complexity.}
Let $r=n/k$ and consider an $r$-division of $G$ in $\TG$.
The pieces of this $r$-division have $\cO(\frac{n}{\sqrt{r}})=\cO(\sqrt{kn})$ boundary vertices in total and this is known to also be an upper bound on the total number of boundary vertices (with multiplicities) of ancestors of pieces in this $r$-division (cf.~the discussion after Corollary~5.1 in~\cite{Vorexact}).

\begin{comment}
The pieces $R_i$ of this $r$-division are such that for each leaf $L$ of $\TG$, either $L$ is in the $r$-division or exactly one ancestor of it is.
The total number of boundary vertices is $\cO(\frac{n}{\sqrt{r}})$.
We can then easily upper bound the total number of boundary vertices of all ancestors of the pieces $R_i$ of this $r$-division (with multiplicities) by \emph{(a)} the fact that the depth of $\TG$ is $\cO(\log n)$ and \emph{(b)} the property that a boundary vertex is passed on as such from a node of $\TG$ to its children. 
They are $\cO(\frac{n}{\sqrt{r}}\log n)=\cO(\sqrt{kn}\log n)$.
\end{comment}

Recall that we have chosen a leaf-piece $P_i$ for each vertex $i \in \{u,v\}\cup X$.
Each piece (other than the $P_i$s) whose $DDG^\circ$ belongs to $\mathcal D$ is a sibling of an ancestor of some $P_i$.
This implies that each $i \in \{u,v\}\cup X$ contributes the $DDG^\circ$s of at most two pieces per level of the decomposition.
Let the ancestor of $P_i$ that is in the $r$-division be $R_i$.
For each $P_i$, we only need to bound the total size of pieces it contributes that are descendants of $R_i$, since we have already bounded the total size of the rest.
We do so by applying~\cref{lem:rdiv} for the subtree of $\TG$ rooted at each $R_i$.
(The extra $\cO(\sqrt{r})$ boundary vertices we start with do not alter the analysis of this lemma as these many are anyway introduced by the first separation of $R_i$.)
It yields $2 \sum_{\ell} \frac{\sqrt{r}}{c^\ell_2}$, where $c_2>1$, which is $\cO(\sqrt{r})$.
Summing over all $k+2$ pieces $P_i$ we obtain the upper bound $\cO(k\sqrt{r})=\cO(\sqrt{kn})$.

FR-Dijkstra runs in time proportional to the total number of vertices of the $DDG^\circ$s in $\mathcal D$ up to a $\log^2 n$ multiplicative factor and hence the time complexity follows.
\end{proof}

\begin{remark}
By using existing techniques (cf.~\cite[Section~5.4]{DBLP:journals/talg/KaplanMNS17}), we can report the actual shortest path $\rho$ in time $\cO(|\rho| \log \log \Delta_\rho)$, where $\Delta_\rho$ is the maximum degree of a vertex of $\rho$ in $G$.\footnote{This remark also applies to the dynamic distance oracle presented in~\cref{sec:dyn} and to the oracles presented in~\cref{sec:newtradeoff}. However, it does not apply to the oracles presented in~\cref{sec:oldtradeoff}, where we use some $DDG$s without storing MSSP data structures or exact distance oracles for the underlying graphs, which would allow us to retrieve the path underlying each $DDG$ edge efficiently.}
\end{remark}

\section{Dynamic Distance Oracles can Handle Vertex Deletions}\label{sec:dyn}

In this section we briefly explain how the techniques of Section~\ref{sec:FR}, and specifically our notion of strict dense distance graphs ($DDG^\circ$s) can be used to facilitate vertex deletions in dynamic distance oracles for planar graphs.
The dynamic distance oracle of~\cite{FR} for non-negative edge-weight updates was improved and simplified in~\cite{MSSP}.
In~\cite{MSSP}, the algorithm obtains an $r$-division of $G$, and then computes and preprocesses the $DDG$s of the pieces of the $r$-division in $\cO(n \log n)$ time to allow for FR-Dijkstra computations in the union of these $DDG$s in time $\cO(\frac{n}{\sqrt{r}} \log^2 n)$.
For a given query asking for the distance from some vertex $u$ to some vertex $v$, the algorithm performs standard Dijkstra computations within the piece containing $u$ (resp.~$v$) to compute the distances from $u$ to the boundary vertices of the piece (resp.~from the boundary vertices of the piece to $v$).
The algorithm then combines this with an FR-Dijkstra computation on the boundary vertices of the $r$-division.
Given an edge update, only the $DDG$ of the unique piece in the $r$-division containing the updated edge needs to get updated, and this requires $\cO(r \log r)$ time.
The balance is at $r=n^{2/3} \log^{2/3} n$, yielding $\cO(n^{2/3} \log^{5/3} n)$ time per update and query.
This result was extended in~\cite{DBLP:conf/stoc/ItalianoNSW11}, where the authors showed how to allow for edge insertions (not violating the planarity of the embedding) and edge deletions and further in~\cite{DBLP:journals/talg/KaplanMNS17} where the authors showed how to handle arbitrary (i.e.~also negative) edge-weight updates.
The time complexity was improved by a $\log^{4/3} \log n$ factor in~\cite{DBLP:conf/icalp/GawrychowskiK18}.

We observe that, by using $DDG^\circ$s instead of standard $DDG$s, vertex deletions can also be handled as follows.
Each vertex is either a boundary vertex in each piece of the $r$-division containing it, or an internal vertex in a unique piece.
If a deleted vertex is a boundary vertex, we just mark it as such and do not relax edges outgoing from it during (FR-)Dijkstra computations. 
If a deleted vertex is internal, we recompute the $DDG^\circ$ of the piece containing it, and reprocess it in time $\cO(r \log r)$ exactly as in the case of edge-weight updates.
The only slightly technical issue we need to take into account is that in~\cref{sec:FR}, edge weights in $DDG^\circ$ are shifted by the large constant $C$ (recall that $C$ is defined as twice the sum of edge weights in the entire graph $G$). The problem is that $C$ might change after each update operation, and this update affects the weights of all the edges in all $DDG^\circ$s. This can be easily solved using indirection. Instead of using the explicit value of $C$ in each edge weight, we represent $C$ symbolically, and store the actual value of $C$ explicitly at some placeholder.
Updating $C$ can be done in constant time because only the explicit value at the placeholder needs to be updated. Whenever an edge weight is required by the algorithm, it is computed on the fly in constant time using the value of $C$ stored in the placeholder.
The data structures underlying FR-Dijkstra do not make use of any 
integer data structures like predecessor data structures ---all used data structures are comparison based. Hence, since the value of $C$ is greater than all edge-weights at the time they are built, they are identical to the data structures that would have been built for this piece with any subsequent value of $C$.
Vertex additions do not alter shortest paths, and hence can be treated trivially.
Note that, as in~\cite{DBLP:conf/stoc/ItalianoNSW11}, we can afford to recompute the entire data structure from scratch after every $\cO(\sqrt{r})$ operations. This guarantees that the number of vertices and number of boundary vertices in each piece remain  $\cO(r)$ and $\cO(\sqrt r)$, respectively, throughout.
We formalize the above discussion in the following theorem.

\begin{theorem}\label{thm:dyn}
A planar graph $G$ can be preprocessed in time $\cO(n \frac{\log^2 n}{\log \log n})$ so that edge-weight updates, edge insertions not violating the planarity of the embedding, edge deletions, vertex insertions and deletions, and distance queries can be performed in time $\cO(n^{2/3} \frac{\log^{5/3} n}{\log^{4/3} \log n})$ each, using $\cO(n)$ space.
\end{theorem}

\section{Tradeoff I: Space vs. Query Time}\label{sec:newtradeoff}

In this section we describe a tradeoff between the size of the oracle and the query time.

\subsection{More Preliminaries and Notation}
We will be using exact distance oracles as a black-box, and as there are different tradeoffs, we denote the space, the query time and the preprocessing time for such a distance oracle over a planar graph of size $n$ as $\sigma(n)$, $\mu(n)$ and $\tau(n)$, respectively.
Let us formally state results from~\cite{DBLP:journals/corr/abs-2007-08585}.

\begin{theorem}[\cite{DBLP:journals/corr/abs-2007-08585}]\label{thm:exact}
Given a planar graph $G$ of size $n$, there exists a distance oracle that can be built in $n^{3/2+o(1)}$ time and admits either of:
\begin{enumerate}[(a)]
\item $n^{1+o(1)}$ space and $\log^{2+o(1)} n$ query time, or
\item $n \log^{2+o(1)} n$ space and $n^{o(1)}$ query time.
\end{enumerate}
\end{theorem}

We now define another useful modification of dense distance graphs.

\begin{definition}\label{def:ddgminus}
The strictly external dense distance graph $DDG^\circ_{ext}(P_1,\ldots,P_i)$ of $G$ for pieces $P_1,\ldots,P_i$ is a complete directed graph on the boundary vertices of $P_1,\ldots,P_i$. The edge $(u,v)$ has weight equal to the length of the shortest $u$-to-$v$ path in $G\setminus \big(\big(\bigcup\limits_{j=1}^{i} P_j\big)\setminus \{u,v\}\big)$.
\end{definition}

$DDG^\circ_{ext}$s can be preprocessed using Theorem~\ref{thm:FR} together with $DDG^\circ$s so that we can perform efficient Dijkstra computations in any union of $DDG^\circ_{ext}$s and $DDG^\circ$s. 

The number of pieces in an $r$-division is at most $cn/r$ for some constant $c$. For convenience, we define
$$g(n,r,k)=\genfrac(){0pt}{0}{cn/r}{k} \leq \frac{(cn)^k}{r^k k!} \leq \frac{n^k}{r^k k^3},$$

\noindent where the last inequality holds for $k$ larger than some constant depending on $c$. We use $g(n,r,k)$ throughout to encapsulate the  dependency on $k$.

\subsection{The Oracle}

\paragraph{Warm up.} Let us first sketch a warm-up $\cOtilde(\frac{n^3}{r^2})$-size oracle with $\cOtilde(\sqrt{r})$ query time that can handle single failures, using the approach of~\cref{sec:FR}. Suppose that we store $DDG^\circ_{ext}$s for all triples of pieces of the $r$-division and that we have preprocessed these for efficient FR-Dijkstra computations together with the $DDG^\circ$s. The total space required is $\cOtilde(g(n,r, 3)(\sqrt{r})^2)=\cOtilde(\frac{n^3}{r^2})$.
Upon query, we first retrieve pieces $R_u, R_v$ and $R_x$ containing $u$, $v$ and $x$, respectively ---assume for now that these pieces are distinct.
Then, we run FR-Dijkstra on $DDG^\circ_{ext}(R_u,R_x,R_v)$, the cone of $u$ in $R_u$, the cone of $v$ in $R_v$ and the pieces that allow us to represent the $DDG$ of $R_x$ subject to the failure of $x$. The query time is $\cOtilde(\sqrt{r})$.
This approach can be generalized to give an oracle that can handle $k$ failures, by considering $(k+2)$-tuples of pieces of the $r$-division.\footnote{We consider the elements of tuples to be unordered throughout.} The space required is $\cO(n\log n + g(n,r,k+2)((k\sqrt{r})^2+k\sqrt{r} \log n))=\cOtilde(\frac{n^{k+2}}{r^{k+1}})$ and the query time is $\cOtilde(k\sqrt{r})$.

\paragraph{Strategy.} Instead of storing information for $(k+2)$-tuples of pieces as in the warm-up, we will store the analogous information for $(k+1)$-tuples and more information for $k$-tuples.
Given $u,v,X$, where $X=\{x_1 , \ldots x_k\}$, we show how to compute $d_G(u,v,X)$ relying on the information stored for the tuples $(R_u , R_{x_1}, \ldots , R_{x_k})$ and $(R_{x_1}, \ldots , R_{x_k})$.
Let us define $Y=\big(\bigcup_{t \in \{u\} \cup X} \partial R_t \big) \setminus X$.
Our aim is to decompose the sought path on the last vertex of $Y$ it visits.

\paragraph{Auxiliary data structure.}
We define \textsc{$u$-to-$k$-Boundary}$(u,X,\mathcal{R},v)$ queries as follows.
The input is
\begin{itemize}
\item a vertex $u$, 
\item a set of vertices $X=\{x_1, \ldots , x_k\}$ of cardinality $k$,
\item a set $\mathcal{R}$ consisting of at most $k+1$ pieces of a specified $r$-division, such that each $j \in \{u\} \cup X$ is in some $R \in \mathcal{R}$,
\item a vertex $v\in R$ for some $R \in \mathcal{R}$, which may be \textsf{null}.
\end{itemize}
The output of the query is the distance from $u$ to each of the vertices of $\{v\} \cup \big( \bigcup_{R \in \mathcal{R}} \partial R \big) \setminus X$ in $G \setminus X$.\footnote{If $v$ is \textsf{null}, we can just set it to be any vertex in $\big( \bigcup_{R \in \mathcal{R}} \partial R \big) \setminus X$. In what follows, we thus do not treat this case separately.}

\begin{lemma}\label{lem:utob}
There exists a data structure of size $\cO(\frac{n^{k+1}}{r^{k}}\log n)$, which can be constructed in time $\cO(\frac{n^{k+1}}{r^{k}}\log^2 n)$, and answers \textsc{$u$-to-$k$-Boundary}$(u,X,\mathcal{R},v)$ queries in $\cO(k\sqrt{r}\log^2 n)$ time.
\end{lemma}
\begin{proof}
We first perform the precomputations of~\cref{sec:FR}. We also obtain an $r$-division of $G$ from $\TG$ in $\cO(n)$ time. Let us denote the pieces of this $r$-division by $R_1,\ldots ,R_{q}$.

We compute $DDG^\circ_{ext}(R_{i_1}, \ldots , R_{i_{k+1}})$ for each ($k+1$)-tuple $(R_{i_1}, \ldots , R_{i_{k+1}})$ of pieces in the $r$-division.
The $DDG^\circ_{ext}$s for all $(k+1)$-tuples can be computed in time $\cO(\frac{(cn)^{k+1}}{r^k}\frac{1}{(k-1)!} \log^2 n)$ for some constant $c$, as shown in~\cref{lem:prec:ddgext} in~\cref{sec:prec}; this dominates the preprocessing time. 
We preprocess these $DDG^\circ_{ext}$s together with the standard $DDG^\circ$s, using~\cref{thm:FR}, to allow for efficient FR-Dijkstra computations.
The total space required is $\cO(g(n,r,k+1)((k\sqrt{r})^2+k\sqrt{r} \log n))=\cO(\frac{n^{k+1}}{r^{k}}\log n)$.

Let us now consider a \textsc{$u$-to-$k$-Boundary}$(u,X,\mathcal{R},v)$ query.
Let $\mathcal{D}=\mathcal{D}(u,v,X)$ be the set of $DDG^\circ$s specified in the query procedure of~\cref{sec:FR} (and illustrated in~\cref{fig:siblings2}).
Now, let the restriction of $\mathcal{D}$ to $\mathcal{R}$ be defined as \[\mathcal{D}_\mathcal{R}=\{DDG^\circ_P \in \mathcal{D} : P \text{ is a weak descendant of some }R \in \mathcal{R}\}.\]
We then run FR-Dijkstra on the union of $\mathcal{D}_{\mathcal{R}}$ and the $DDG^\circ_{ext}$ of a $(k+1)$-tuple that contains all elements of $\mathcal{R}$, not relaxing edges whose tail is in $X$ if encountered. This takes time $\cO(k\sqrt{r} \log^2 n)$.
\end{proof}

\begin{example}
In order to develop some intuition of the above proof, consider~\cref{fig:siblings2}, and suppose that red piece (say $R_1$) and the unique dark gray piece that contains both $u$ and $v$ (say $R_2$) are pieces of the $r$-division. Then, upon a \textsc{$u$-to-$k$-Boundary}$(u,\{x\},\{R_1,R_2\},v)$ query, we would run FR-Dijkstra on the union of the $DDG^\circ$s of all gray pieces that are weak descendants of either $R_1$ or $R_2$ (corresponding to $\mathcal{D}_\mathcal{R}$) and $DDG^\circ_{ext}(R_1,R_2)$, not relaxing edges whose tail is in $X$ if encountered.
\end{example}

\paragraph{Extra preprocessing: exact distance oracles.}
For each $k$-tuple of pieces $\mathcal{R}=(R_{i_1},\ldots,R_{i_{k}})$ of the $r$-division we compute and store an exact distance oracle for graph $G_\mathcal{R}$, which is obtained from $G\setminus (\cup_{j \in \{i_1, \ldots , i_k\}} R_j \setminus \cup_{j \in \{i_1, \ldots , i_k\}} \partial R_j)$ by increasing the weight of edges whose tail is in $\partial R_j$ for some $j \in \{i_1, \ldots , i_k\}$ by a constant $W$ so that they are not internal vertices in any shortest path. 

We are now ready to describe the query procedure; \cref{fig:fast_query} provides an illustration of the setting.

\begin{figure}[htpb!]
    \centering
    \includegraphics[width=13cm]{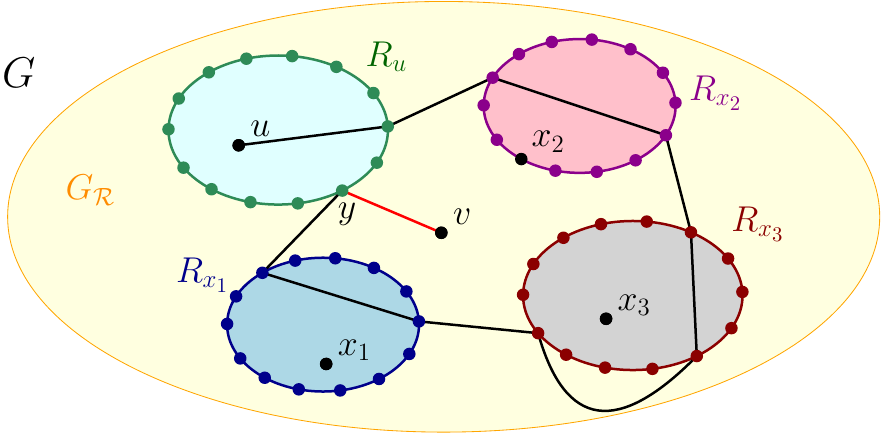}
    \caption{An illustration of the setting of the query.
    The case where $v$ does not lie in any piece from $\mathcal{R}$ is shown.
    Suppose that the shortest $u$-to-$v$ path is as shown. Part I of the query computes the length of its black portion using FR-Dijkstra, while part II computes the length of its red portion using an exact distance oracle for $G_\mathcal{R}$.}
    \label{fig:fast_query}
\end{figure}

\paragraph{Query part I: $u$-to-$Y$.}
With the above lemma at hand, we can easily compute the $u$-to-$Y$ distances.
We first retrieve $k+1$ not necessarily distinct pieces, $R_u, R_{x_1}, \ldots ,R_{x_k}$, such that $u \in R_u$ and $x_i \in R_{x_i}$.
To support that, each vertex stores a pointer to some piece of the $r$-division that contains it.
In the degenerate case that these $v$ is in one of those pieces, then we are done by performing a \textsc{$u$-to-$k$-Boundary}$(u,X,\mathcal{R},v)$ query, with $\mathcal{R}=\{R_u, R_{x_1}, \ldots ,R_{x_k}\}$.
(In order to be able to check whether a vertex is in some particular piece of $\TG$ efficiently, we store, for each piece in $\TG$, a binary tree with the vertices in the piece.)
If on the other hand $v$ does not lie in any piece from $\mathcal{R}$, then any shortest $u$-to-$v$ path must go through some vertex $y \in Y$.
A \textsc{$u$-to-$k$-Boundary}$(u,X,\mathcal{R},\textsf{null})$ query returns the distance from $u$ to each $y \in Y$ in $G \setminus X$.
This takes time $\cO(k\sqrt{r} \log^2 n)$.

\paragraph{Query part II: $Y$-to-$v$.}

For each $y \in Y$ we perform a $y$-to-$v$ distance query in the exact distance oracle for $G_\mathcal{R}$.
We then simply have to take the minimum of $d_G(u,y,X)+d_{G_\mathcal{R}}(y,v)$ among all $y \in Y$ and substract $W$ from this value to retrieve the sought distance.
The time required is $\cO(k\sqrt{r} (\log^2 n+\mu(n)))$.

\begin{theorem}\label{thm:main}
Assume that given a planar graph $G$ with $n$ vertices, one can construct in $\tau(n)$ time a $\sigma(n)$-size distance oracle that answers queries in $\mu(n)$ time.
Then, for any integer $r \in [1,n]$ and for any integer $k \leq \frac{n}{r}$, there exists a data structure of size $\cOtilde(\frac{n^{k}}{r^{k}} \sigma(n))$, which can be constructed in time $\cOtilde(\frac{n^{k}}{r^{k}} \tau(n))$, and can answer the following queries in $\cO(k\sqrt{r} (\log^2 n+\mu(n)))$ time. Given vertices $u$ and $v$ and a set $X$ of at most $k$ failed vertices, report the length of a shortest $u$-to-$v$ path in $G \setminus X$.
\end{theorem}

\begin{remark}
Our distance oracle can actually handle any number $f$ of failures that lie in at most $k$ pieces of the $r$-division in time $\cOtilde((k+\sqrt{fk})\sqrt{r})$. This follows from the fact that the $DDG^\circ$s we will add for a piece with $f_i$ failures have total size $\cOtilde(\sqrt{f_i r})$ by the same analysis as in the proof of~\cref{lem:fr} and the fact that, given $f_1,\ldots,f_k$ such that $\sum_{i=1}^k f_i=f$, we have $\sum_{i=1}^k \sqrt{f_i}\leq \sqrt{fk}$ by the Cauchy-Schwarz inequality. This remark applies to~\cref{thm:main2} as well.
\end{remark}

\begin{proof}[Proof of Theorem~\ref{thm:main}.]
The time complexity of the query algorithm is analyzed above. We next analyze its correctness, the space required by our data structure and its construction time.

\emph{Query correctness.}
Let $\rho$ be a shortest $u$-to-$v$ path in $G \setminus X$.
Let $z$ be the last vertex of $\rho$ that belongs to $Y$, if any such vertex exists.
The distance $d_G(u,z,X)$ from $u$ to $z$ in $G\setminus \{x\}$ is computed by the FR-Dijkstra, while the distance from $z$ to $v$ in $G\setminus X$ is obtained from the query to the exact distance oracle.
In the complementary case, in which no vertex of $\rho$ is in $Y$, we have that $v \in R_u$ and hence the sought distance is computed by the \textsc{$u$-to-$k$-Boundary}$(u,X,\mathcal{R},v)$ query.
It is easy to see that we do not obtain any distance that does not correspond to an actual path in $G\setminus X$.

\emph{Space complexity.}
The space required for the exact distance oracles is $\cOtilde(\frac{n^{k}}{r^{k}} \sigma(n))$. This dominates the $\cO(\frac{n^{k+1}}{r^{k}} \log n)$ space required by the data structure of~\cref{lem:utob} in the $\cOtilde(\cdot)$ notation.

\emph{Preprocessing time.} 
This is also dominated by the $\cO(\frac{n^{k}}{r^{k}} \tau(n))$ time required to build the exact distance oracles, at least in the $\cOtilde(\cdot)$ notation.
\end{proof}

By combining~\cref{thm:exact,thm:main} we obtain the following tradeoffs.

\begin{corollary}\label{cor:main}
Given a planar graph $G$ of size $n$, there exists a distance oracle that supports up to $k$ vertex failures that can be built in time $\frac{n^{k}}{r^{k}} \cdot n^{3/2+o(1)}$ and admits either of:
\begin{enumerate}[(a)]
\item $\frac{n^{k}}{r^{k}} \cdot n^{1+o(1)}$ space and $k\sqrt{r} \cdot \log^{2+o(1)} n$ query time, or
\item $\cOtilde\left(\frac{n^{k}}{r^{k}} \cdot n\right)$ space and $k\sqrt{r} \cdot n^{o(1)}$ query time.
\end{enumerate}
\end{corollary}

\section{Tradeoff II: Faster Preprocessing, More Space}\label{sec:oldtradeoff}

We now proceed to describe the tradeoff that was the main result in a preliminary version of this work~\cite{sodaversion}, and is encapsulated in the following theorem.\footnote{This result was obtained before the recent breakthroughs in exact distance oracles, which have now allowed us to get the tradeoffs of~\cref{cor:main}.}

\begin{restatable}{theorem}{thmmainsecond}\label{thm:main2}
For any integer $r \in [1,n]$ and for any integer $k \leq \frac{n}{r}$, there exists a data structure of size $\cO(\frac{n^{k+1}}{r^{k+1}} \sqrt{nr} + n \log^2 n)$, which can be constructed in time $\cOtilde(\frac{n^{k+1}}{r^{k+1}} \sqrt{nr} + n^2)$, and can answer the following queries in $\cO(k\sqrt{r} \log^2 n)$ time. Given vertices $u$ and $v$ and a set $X$ of at most $k$ failed vertices, report the length of a shortest $u$-to-$v$ path that avoids $X$.
\end{restatable}

For a fixed $r$, such that the oracles underlying both~\cref{cor:main,thm:main2} have query time roughly $k\sqrt{r}$, the oracles of~\cref{cor:main} require space smaller by roughly a factor of $n^{1/2}/r^{1/2}$ compared to the oracle that we present in this section.
Interestingly, however, the preprocessing time of the oracles of~\cref{cor:main} can be worse by polynomial factors for some range of values of~$r$. E.g., for $r=n^{1/4}$, compare $\frac{n^{k}}{r^{k}} \cdot n^{3/2}$ with $\frac{n^{k}}{r^{k}} \cdot n^{3/2}/r^{1/2}+n^2$.

\subsection{Voronoi Diagrams with Point Location}\label{sec:vor}
Let $P$ be a directed planar graph with real edge-lengths, and no negative-length cycles.
Let $S$ be a set of vertices that lie on a single face of $P$; we call the elements of $S$ sites. Each site $s\in S$ has a weight $\omega(s) \geq 0$ associated with it. The additively weighted distance between a site $s \in S$ and a vertex $v \in V$, denoted by $d^\omega_P(s,v)$ is defined as $\omega(s)$ plus the length of the $s$-to-$v$ shortest path in $P$.

\begin{definition}
The additively weighted Voronoi diagram of $(S,\omega)$ ($VD(S, \omega)$) within $P$ is a partition of $V(P)$ into pairwise disjoint sets, one set $\Vor(s)$ for each site $s \in S$. The set $\Vor(s)$ which is called the Voronoi cell of $s$, contains all vertices in $V(P)$ that are closer (w.r.t.~$d^\omega_P$(. , .)) to $s$ than to any other site in $S$ (assuming that the distances are unique). There is a dual representation $VD^{*}(S,\omega)$ of a Voronoi diagram $VD(S,\omega)$ as a planar graph with $\cO(|S|)$ vertices and edges.
\end{definition}

\begin{theorem}[\cite{Vorexact,Vordiam}]\label{thm:Vorexact}
Given subsets $S'_1,\ldots, S'_m$ of $S$, and additive weights $\omega_i(u)$ for each $u \in S'_i$, we can construct a
data structure of size  $\cO(|P|\log|P| + \sum_i|S'_i|)$ that supports the following (\emph{point location}) queries. Given $i$, and a vertex $v$ of $P$, report in $\cO(\log^2 |P|)$ time the site $s$ in the additively weighted Voronoi diagram $VD(S_i,\omega_i)$ such that $v$ belongs to $\Vor(s)$ and the distance $d^{\omega_i}_P(s,v)$.
The time and space required to construct this data structure are $\cOtilde(|P||S|^2+\sum_i|S'_i|)$.
\end{theorem}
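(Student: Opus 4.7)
The plan is to split the data structure into a shared ``skeleton'' of size $\cO(|P|\log|P|)$ plus per-diagram components of size $\cO(|S'_i|)$. For the skeleton I would use Klein's multiple-source shortest-paths (MSSP) structure for $P$ rooted at the face containing $S$: it fits in $\cO(|P|\log|P|)$ space, is built in $\cO(|P|\log|P|)$ time, and answers distance queries $d_P(s,v)$ for any $s \in S$, $v \in V(P)$ in $\cO(\log|P|)$ time. This accounts for the $\cO(|P|\log|P|)$ space term and for one $\log|P|$ factor in the query time.

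For each $i$, I would then construct $VD^*(S'_i,\omega_i)$. Because all sites lie on a single face, the additively weighted Voronoi diagram has a planar dual on $\cO(|S'_i|)$ vertices and edges: the vertices are ``trichromatic'' faces of $P$ where three cells meet, and the edges are bichromatic boundaries between pairs of cells. I would identify the trichromatic faces by propagating MSSP-based distance comparisons along candidate cell boundaries; amortizing the distance precomputation across all $m$ inputs (essentially precomputing restricted site-to-site distances) yields total construction time $\cOtilde(|P||S|^2 + \sum_i|S'_i|)$. The per-diagram storage is just the combinatorial dual together with bookkeeping (additive weights at sites and pointers into the MSSP structure) sufficient to recover any absolute $d^{\omega_i}_P$ value in $\cO(\log|P|)$ time; this contributes the $\cO(\sum_i|S'_i|)$ space term.

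For point location I would attach to each $VD^*(S'_i,\omega_i)$ a balanced centroid decomposition of size $\cO(|S'_i|)$. On query $(i,v)$, at each level I would decide which side contains $v$ by comparing $d^{\omega_i}_P(s,v)$ for a constant number of candidate sites $s$ encoded at the current centroid, each comparison costing $\cO(\log|P|)$ via MSSP. The depth is $\cO(\log|S'_i|) = \cO(\log|P|)$, producing the $\cO(\log^2|P|)$ query bound, and the desired distance is read off from the final comparison. The main obstacle is precisely this point-location mechanism: the candidate sites at each centroid must be encoded implicitly in $VD^*$ (since $v$ is unknown at construction time), and all distance comparisons must route through the shared MSSP structure so that per-diagram storage stays linear in $|S'_i|$. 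This delicate interplay between the global planar distance oracle and the local combinatorial Voronoi dual is the technical contribution of~\cite{Vordiam,Vorexact}, which I would invoke for the needed combinatorial machinery.
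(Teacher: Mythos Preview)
The paper does not actually prove this theorem; it is quoted from~\cite{Vorexact,Vordiam} (with the accompanying remark that the precise tradeoff is implicit rather than explicit in~\cite{Vorexact}). Your sketch is a faithful outline of exactly that cited machinery: the shared MSSP structure on the face containing $S$ supplying the $\cO(|P|\log|P|)$ term and the per-query $\log|P|$ factor, the $\cO(|S'_i|)$-size dual representation $VD^*$ of each Voronoi diagram, and point location via a balanced centroid decomposition of $VD^*$ with $\cO(\log|P|)$ levels and $\cO(\log|P|)$-time MSSP lookups per level. Since you ultimately invoke~\cite{Vordiam,Vorexact} for the combinatorial details, your proposal and the paper's treatment coincide.
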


\begin{remark}
Part of Theorem~\ref{thm:Vorexact} is proved in~\cite{Vorexact}, though not stated there explicitly as a theorem.
It is a tradeoff to Theorem~1.1 of~\cite{Vorexact}, requiring less space, and hence more applicable to our problem.
\end{remark}

\subsection{Handling a Single Failure}\label{sec:1f}
For ease of presentation we first describe an oracle that can handle just a single failure. 
We prove the following lemma, which is a restricted version of Theorem~\ref{thm:main2}.

\begin{lemma}
For any $r \in [1,n]$, there exists a data structure of size $\cO(\frac{n^{5/2}}{r^{3/2}} + n\log^2 n)$, which can be constructed in time $\cOtilde(\frac{n^{5/2}}{r^{3/2}} + n^2)$, and can answer the following queries in $\cO(\sqrt{r} \log^2 n)$ time. Given vertices $u,v,x$, report the length of a shortest $u$-to-$v$ path that avoids $x$.
\end{lemma}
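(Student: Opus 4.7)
The plan is to combine the $r$-division Voronoi-diagram oracle of~\cite{Vorexact} with the strictly internal and strictly external dense distance graphs of~\cref{sec:FR} and Definition~\ref{def:ddgminus}. Since a single failed vertex $x$ lies in exactly one piece $R_x$ of an $r$-division $\mathcal{R}$, I would organise the oracle as one Vorexact-style sub-oracle per piece of $\mathcal{R}$, each of size $\tilde{\cO}(n^{3/2}/\sqrt{r})$. Summing over the $\cO(n/r)$ pieces yields the $\tilde{\cO}(n^{5/2}/r^{3/2})$ bound, while the $n\log^2 n$ term is contributed by the FR-Dijkstra preprocessing of the full recursive decomposition inherited from~\cref{lem:fr}.

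For preprocessing, I would first compute $\TG$, all $DDG^\circ_P$, and their FR-Dijkstra bookkeeping as in~\cref{lem:fr}, and then fix an $r$-division $\mathcal{R}\subset\TG$ via~\cref{lem:rdiv}. For each $R\in\mathcal{R}$ I would compute $DDG^\circ_{ext}(R)$ and invoke Theorem~\ref{thm:Vorexact} twice: once on the piece $R$ itself, and once on the external planar graph $G\setminus(R\setminus\partial R)$, each time with $\partial R$ as sites (which lie on a single face). The subsets $S'_i$ range over the boundary vertices of sibling pieces along the path from $R$ to the root of $\TG$, and the additive weights come from the relevant $DDG^\circ$ and $DDG^\circ_{ext}$ entries; by~\cref{lem:rdiv} this yields per-piece size $\tilde{\cO}(n^{3/2}/\sqrt{r})$ and, invoking Proposition~\ref{prop:2} together with the $\cO(|P||S|^2)$ cost in Theorem~\ref{thm:Vorexact}, total construction time $\tilde{\cO}(n^{2}+n^{5/2}/r^{3/2})$.

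On query $(u,v,x)$, I would locate $R_x$ in $\cO(1)$ and decompose $d_{G\setminus\{x\}}(u,v)$ via the splicing rule $\min_{b\in\partial R_x}\bigl(d_{G\setminus\{x\}}(u,b)+d_{G\setminus\{x\}}(b,v)\bigr)$, augmented with the separate case in which the whole shortest path lies inside $R_x\setminus\{x\}$. Each of the $\cO(\sqrt{r})$ summands is evaluated by a constant number of Voronoi point-location queries on the precomputed diagrams---external when the relevant segment avoids the interior of $R_x$, and internal when it enters $R_x$ but skips $x$. The $x$-dependent additive weights on $\partial R_x$ required by the internal diagram are produced by an FR-Dijkstra run inside $R_x$ that refuses to relax any edge tailed at $x$, mimicking the query algorithm of~\cref{lem:fr} restricted to the subtree of $\TG$ rooted at $R_x$; this takes $\tilde{\cO}(\sqrt{r})$ time. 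Combined with $\cO(\sqrt{r})$ point-location queries at $\cO(\log^2 n)$ each, the total query time is $\cO(\sqrt{r}\log^2 n)$.

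The principal obstacle is that Theorem~\ref{thm:Vorexact} produces failure-free Voronoi diagrams, whereas the answer must live in $G\setminus\{x\}$. I would resolve this by observing that $x$'s influence is confined to $R_x$: the external diagram of $R_x$ represents distances in $G\setminus(R_x\setminus\partial R_x)$ and is therefore automatically $x$-oblivious, while the internal diagram of $R_x$ only needs $x$-aware additive weights on the $\cO(\sqrt{r})$ vertices of $\partial R_x$, which are cheap to produce on the fly. Correctness then reduces to a splicing argument---every shortest $u$-to-$v$ path in $G\setminus\{x\}$ is split at a $\partial R_x$-crossing (or handled by the special case)---that is a faithful adaptation of the one used in the proof of~\cref{lem:fr}.
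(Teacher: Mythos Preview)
Your plan has a real structural gap: organising the oracle around the \emph{single} piece $R_x$ is not enough. Take a query $(u,v,x)$ with $u,v\notin R_x$. Your splicing rule $\min_{b\in\partial R_x}\bigl(d_{G\setminus\{x\}}(u,b)+d_{G\setminus\{x\}}(b,v)\bigr)$ needs, for every $b\in\partial R_x$, the value $d_{G\setminus\{x\}}(u,b)$, where $u$ is an \emph{arbitrary} vertex of the external graph. A point-location query in a Voronoi diagram with sites $\partial R_x$ returns only $\min_{b}\bigl(\omega(b)+d(b,\cdot)\bigr)$ for \emph{precomputed} weights $\omega$; it cannot hand you the $|\partial R_x|$ individual distances $d_{ext}(u,b)$ for a query-time vertex $u$. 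Your ``subsets $S'_i$ ranging over boundary vertices of sibling pieces'' would help only if $u$ happened to be such a boundary vertex, which it is not in general. Worse, when the shortest $u$-to-$v$ path in $G\setminus\{x\}$ avoids $R_x$ altogether, splicing at $\partial R_x$ is simply invalid, and nothing in your data structure yields the direct $u$-to-$v$ distance in the external graph.

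The paper resolves this by storing data per \emph{pair} of pieces $(R_i,R_j)$ of the $r$-division --- one piece localises $u$, the other localises $x$. For each pair it keeps $DDG^\circ_{ext}(R_i,R_j)$ and, for every $y\in\partial R_i\cup\partial R_j$, Voronoi diagrams on those sibling pieces $Q$ along the two root-paths that contain neither $R_i$ nor $R_j$ (so in particular $Q$ contains $v$ but not $u$ or $x$). At query time one picks $R_u,R_x$, runs a single FR-Dijkstra on $DDG^\circ_{ext}(R_u,R_x)$ together with the internal $DDG^\circ$s inside $R_u$ and $R_x$ to obtain $d_{G\setminus\{x\}}(u,y)$ for \emph{all} $y\in\partial R_u\cup\partial R_x$ in $\tilde{\cO}(\sqrt r)$ time, and then does one point-location in the appropriate $Q\ni v$ per such $y$. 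The pairing is exactly what lets the $u$-side be handled by FR-Dijkstra (which needs $u$ to sit inside a piece whose $DDG^\circ$s and external $DDG^\circ$ are at hand) rather than by Voronoi queries; your single-piece scheme pushes the $u$-side onto the Voronoi machinery, which cannot support it. This is also why the space term is $g(n,r,2)\cdot\sqrt r\cdot\cO(\sqrt n)=\cO(n^{5/2}/r^{3/2})$: it counts pairs of pieces, times $\cO(\sqrt r)$ sources $y$, times $\cO(\sqrt n)$ total sites along a cone.
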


\paragraph{Strategy.}
We change part II of the query, i.e.~computing $Y$-to-$v$ distances. After having computed $u$-to-$Y$ distances in $G \setminus X$, we identify an appropriate piece $Q$ in $\TG$ that contains $v$, and does not contain $u$ nor $x$. Exploiting the fact that distances within $Q$ remain unchanged when $x$ fails, we employ Voronoi Diagrams with point location for the piece $Q$, adapting ideas from~\cite{Vorexact}.

\paragraph{Additional preprocessing.}
For each pair of pieces $(R_i,R_j)$ of the $r$-division we compute and store the following. Let $S$ be a separator in the recursive decomposition, separating a piece into two subpieces $Q$ and $R$, such that $R_i \subseteq R$ and $R_j  \not\subseteq Q$.
For each $y \in \partial R_i \cup \partial R_j$,
for each hole $h$ of $Q$,
we compute and store a Voronoi diagram with the point location data structure for $Q$,
with sites the boundary vertices of~$Q$ that lie on $h$,
and additive weights the distances from $y$ to these sites in $G\setminus ((R_i\cup R_j)\setminus \{y\})$.

We now show that the space required is $\cOtilde(\frac{n^{5/2}}{r^{3/2}})$.
The space required for the first part of the query is $\cOtilde(\frac{n^2}{r})$ by~\cref{lem:utob}. We next analyze the space required for storing the Voronoi diagrams.
We consider $\cO(g(n,r,2)) = \cO(\frac{n^2}{r^2})$ pairs of pieces $(R_i,R_j)$, and for each of the $\cO(\sqrt{r})$ boundary vertices of each such pair we store, in the worst case, a Voronoi diagram for each of the $\cO(1)$ holes of each sibling of the nodes in the root-to-$R_i$ and root-to-$R_j$ paths in $\TG$.
The total number of sites of all Voronoi diagrams we store for a pair of pieces can be upper bounded by $\cO(\sqrt{n})$ by noting that the number of sites of a Voronoi diagram for a piece at level $\ell$ of $T_G$ is $\cO(\sqrt{n}/c_2^\ell)$ by~\cref{lem:rdiv}.
By Theorem~\ref{thm:Vorexact}, the space required to store a representation of a set of Voronoi diagrams with the functionality allowing for efficient point location queries for a piece $P$, with sites a subset of the boundary vertices of $P$, lying on a hole $h$ is $\cO\big(\sum_{P \in \TG} (\mathcal{S}_{P,h}+|P|\log|P|)\big)$, where $\mathcal{S}_{P,h}$ is the total cardinality of these sets of sites. Summing over all holes of all pieces $P$, noting that $\sum_{P \in \TG}\sum_h \mathcal{S}_{P,h} = \cO(\frac{n^{5/2}}{r^{3/2}})$ by the above discussion, and using~\cref{prop:1}, the total space required for all Voronoi diagrams is $\cO(\frac{n^{5/2}}{r^{3/2}}+n\log^2 n)$.

As for the preprocessing time, the precomputations of~\cref{lem:utob} take $\cOtilde(\frac{n^2}{r})$.
The additive weights can be computed in time $\cO(\frac{n^2}{r} \sqrt{nr} \log^3{n})$; see~\cref{lem:prec:vd} in~\cref{sec:prec}.
Further, we show in~\cref{lem:prec:vd2} that we can compute all required Voronoi diagrams in time $\cOtilde(n^2+\mathcal{S})$, where $\mathcal{S}$ is the size of their representation described in~\cref{sec:vor}.

\paragraph{Query.}
We first retrieve a piece $R_v$ of the $r$-division, containing $v$. 
We then proceed as follows (inspect~\cref{fig:query} for an illustration).

\begin{figure*}[t]
\resizebox{1.05\textwidth}{!}{
\subfloat[\large{Root-to-$R_i$ paths in~$\TG$.}]{
 % fig3a
\begin{tikzpicture}

\node[shape=circle,scale=0.5,draw=black, fill, label = {right: $root$}] (r) at (0,0) {};
\node[shape=circle,scale=0.5,draw=black, fill] (1) at (1,-1) {};
\node[shape=circle,scale=0.5,draw=black, fill, label = {right: $S$}] (S) at (2,-2) {};
\node[shape=circle,scale=0.5,draw=black, fill, label = {left: $R$}] (R) at (1.5,-3) {};
\node[shape=circle,scale=0.5,draw=black, fill, label = {right: $Q$}] (Q) at (2.5,-3) {};

\node[shape=circle,scale=0.5,draw=black, fill, label = {left: $R_x$}] (Rx) at (0.2,-4) {};
\node[shape=circle,scale=0.5,draw=black, fill, label = {right: $R_u$}] (Ru) at (1.2,-4) {};
\node[shape=circle,scale=0.5,draw=black, fill, label = {right: $R_v$}] (Rv) at (3,-4) {};

\draw [->,decorate,decoration={snake,amplitude=.4mm,segment length=2mm,post length=1mm}] (r) -- (1);

\draw [->,decorate,decoration={snake,amplitude=.4mm,segment length=2mm,post length=1mm}] (1) -- (S);
\draw [->,decorate,decoration={snake,amplitude=.4mm,segment length=2mm,post length=1mm}] (1) -- (Rx);

\draw [->,decorate] (S) -- (R);
\draw [->,decorate] (S) -- (Q);

\draw [->,decorate,decoration={snake,amplitude=.4mm,segment length=2mm,post length=1mm}] (R) -- (Ru);

\draw [->,decorate,decoration={snake,amplitude=.4mm,segment length=2mm,post length=1mm}] (Q) -- (Rv);

\node[shape=circle,scale=0.5,draw=white] (align) at (3,-5) {};

\end{tikzpicture}

}
\subfloat[\large{The $u$-to-$v$ path in $G \setminus \{x\}$.}]{
{\includegraphics[width=\textwidth]{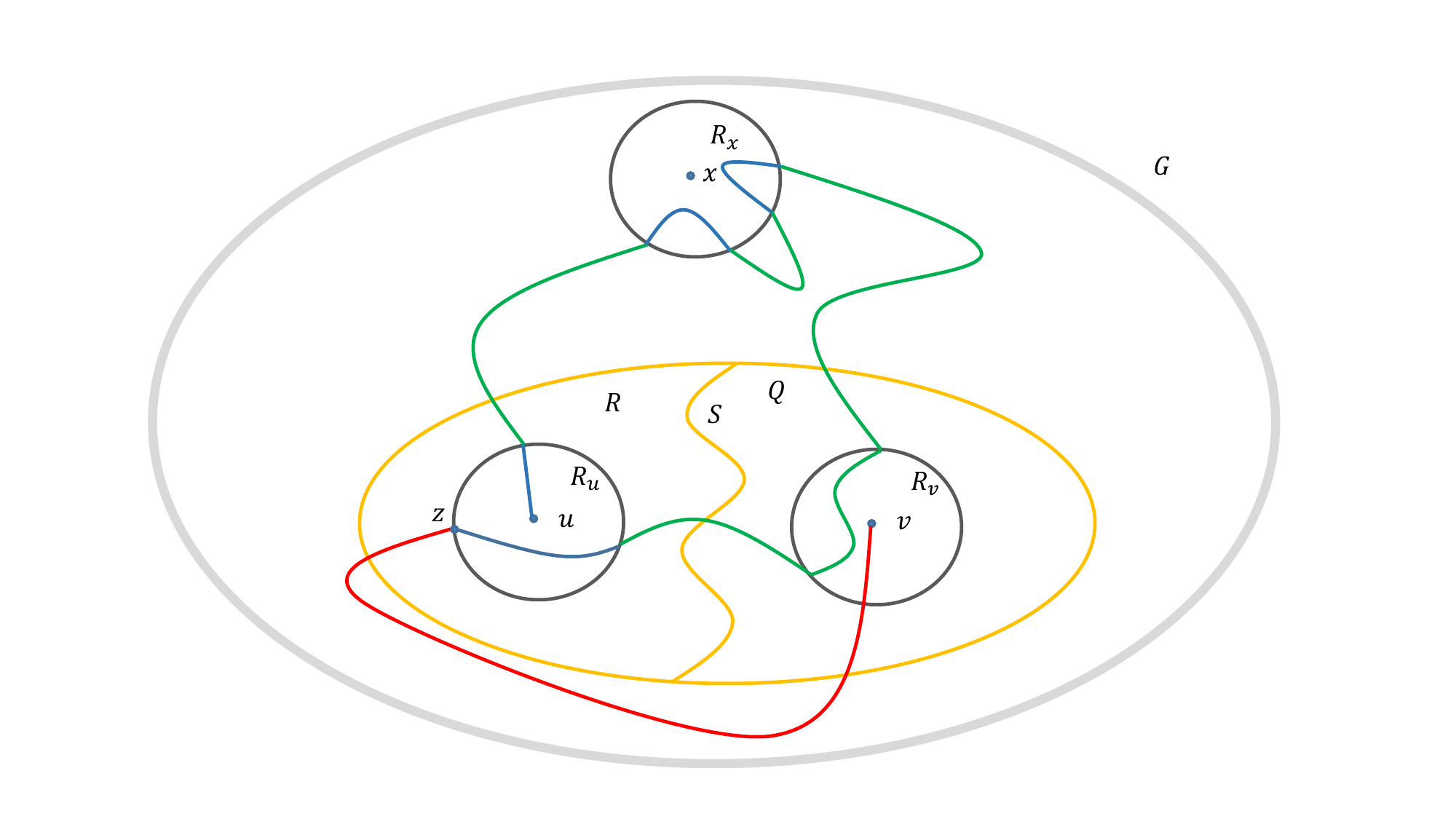}}
}
}
\caption{To the left: A view of the root-to-$R_i$ paths in $\TG$. Straight edges denote edges of the tree, while snake-shaped edges denote paths. To the right: A view of the shortest path in $G$. The paths in blue are represented by the $DDG^\circ$s, the paths in green by $DDG^\circ_{ext}$ and the length of the path in red is returned by the point location query in the Voronoi diagram.
}
\label{fig:query}
\end{figure*}

\begin{enumerate}
\item Following parent pointers from $R_v$ in $\TG$, we find the highest ancestor $Q$ of $R_v$ containing neither $u$ nor $x$.
Thus, the sibling $R$ of $Q$ in $\TG$ contains a vertex $i \in \{u,x\}$.
We find a descendant $R_i$ of $R$ that is in the $r$-division and contains $i$.
We then find any piece $R_j$ of the $r$-division containing the element of $\{u,x\}\setminus \{i\}$.
Note that, by choice of $Q$, $R_j$ is not a descendant of $Q$. Finding these pieces requires time $\cO(\log^2 n)$.
\item We perform a \textsc{$u$-to-$1$-Boundary}$(u,\{x\},(R_u,R_v),\textsf{null})$ query.
This takes time $\cO(\sqrt{r} \log^2 n)$ and returns $d_G(u,y,x)$ for each $y \in \partial R_u  \cup \partial R_x$.
\item For each $y \in (\partial R_u \cup \partial R_x)  \setminus \{x\}$, for each hole $h$ of $Q$,   
we perform an $\cO(\log^2 n)$-time query to the Voronoi diagram stored for $R_u, R_x, y$, and $h$ to get the distance from $y$ to $v$ in $G\setminus((R_u\cup R_x)\setminus\{y\})$.
The required distance is the minimum $d_G(u,y,x)+d(y,v,(R_u\cup R_x)\setminus\{y\})$ over all $y$.
Each query takes $\cO(\log^2 n)$ time and hence the total time required is $\cO(\sqrt{r} \log^2 n)$.
\end{enumerate}

We now argue the correctness of the query algorithm. Let $\rho$ be a shortest $u$-to-$v$ path that avoids $x$.
Let $z$ be the last vertex of $\rho$ that belongs to $\partial R_u \cup \partial R_x$. Let $h'$ be the hole of $Q$ such that the last vertex of $\rho$ that belongs to the boundary of $Q$ belongs to hole $h'$. The distance $d_G(u,z,x)$ from $u$ to $z$ in $G\setminus \{x\}$ is computed by the FR-Dijkstra computation in step 2, while the distance from $z$ to $v$ in $G\setminus \{x\}$ is obtained from the query to the Voronoi diagram stored for $R_u, R_x, z$, and $h'$.
It is easy to see that we do not obtain any distance that does not correspond to an actual path in $G\setminus \{x\}$ and hence the correctness of the query algorithm follows.

\subsection{Handling Multiple Failures}

We now explain how to straightforwardly generalize the approach presented in the previous subsections to obtain oracles that can handle multiple failures.

\paragraph{Preprocessing.}
\begin{enumerate}
\item We perform the precomputations of~\cref{lem:utob}.
\item For each $(k+1)$-tuple of pieces $(R_{i_1},\ldots,R_{i_{k+1}})$ of the $r$-division we compute and store the following. Let $S$ be a separator in the recursive decomposition, separating a piece into $Q$ and $R$, such that for some $j$, $R_{i_j}\subseteq R$ and none of the other pieces of the tuple is a subgraph of $Q$.
For each $y \in \bigcup\limits_{j=1}^{k+1}\partial R_{i_j}$,
for each hole $h$ of $Q$,
we store a Voronoi diagram with the point location data structure for $Q$,
with sites the boundary vertices of $Q$ that lie on $h$,
and additive weights the distances from $y$ to these sites in $G\setminus \big(\big(\bigcup\limits_{j=1}^{k+1} R_{i_j}\big)\setminus \{y\}\big)$.
\end{enumerate}

\paragraph{Query.} 

The algorithm is then essentially the same as that of~\cref{sec:1f}.
\begin{enumerate}
\item We find the highest ancestor $Q$ of $R_v$ in $\TG$ that does not contain any of the elements of $\{u\} \cup X$ and retrieve a descendant of its sibling in the $r$-division that does contain some element $i \in \{u\} \cup X$. We then identify a piece $R_j$ in the $r$-division for each $j \in \{u\} \cup X \setminus \{i\}$. This requires time $\cO(k \log^2 n)$.
\item We perform a \textsc{$u$-to-$k$-Boundary}$(u,X,\mathcal{R},v)$ query, for $\mathcal{R}=(R_u,R_{x_1}, \ldots , R_{x_k})$, which requires time $\cO(k\sqrt{r})$.
\item We perform $\cO(k\sqrt{r})$ point location queries to Voronoi diagrams of $Q$, each requiring time $\cO(\log^2 n)$.
\end{enumerate}

We hence obtain our second tradeoff theorem, restated here for convenience.

\thmmainsecond*

\begin{proof}
The correctness of the query algorithm follows by an argument identical to the one for the case of single failures (see~\cref{sec:1f}); its time complexity is analyzed above. We next analyze the space required by our data structure and its construction time.

\emph{Space Complexity.}
The space occupied by the data structure of~\cref{lem:utob} is $\cO(\frac{n^{k+1}}{r^{k}} \log n)$.
We bound the space required for the Voronoi diagrams by $\cO(g(n,r,k+1) k \sqrt{nkr}+n\log^2 n)$ as follows.
For each of the $\cO(k\sqrt{r})$ boundary vertices of each of the $\cO(g(n,r,k+1))$ $(k+1)$-tuples,
we store a Voronoi diagram for each of the $\cO(1)$ holes, of (at most) each of the siblings of the nodes in the root-to-$R_i$ path in $\TG$ for each $R_i$ in the tuple.
With an argument identical to the one used in the proof of Lemma~\ref{lem:fr}, the total number of boundary vertices (with multiplicities) of all of these pieces is $\cO(\sqrt{kn})$.
Hence the total number of sites of all Voronoi diagrams that we store is $\cO(g(n,r,k+1)k\sqrt{nkr})$.
By Theorem~\ref{thm:Vorexact}, the size required to store them with the required functionality is thus $\cO(g(n,r,k+1)k\sqrt{nkr} + \sum_{P \in \TG} |P| \log|P|)=\cO\left(\frac{(cn)^{k+1}}{r^{k+1}} \cdot \frac{1}{k!} \cdot \sqrt{nkr} + n \log^2 n\right)$, where the last equality follows by~\cref{prop:1}. 

Thus, since $k\leq n/r$, the total space required is \[\cO\left(\frac{(cn)^{k+1}}{r^{k+1}} \cdot \frac{1}{k!} \cdot (kr +  \sqrt{nkr}) + n \log^2 n\right) = \cO\left(\frac{(cn)^{k+1}}{r^{k+1}} \cdot \frac{1}{k!} \cdot \sqrt{nkr} + n \log^2 n\right).\]

\emph{Preprocessing time.} 
The preprocessing of~\cref{lem:utob} takes $\cO(\frac{n^{k+1}}{r^{k}} \log^2 n)$ time.
We can compute the required additive weights of all $(k+1)$-tuples in time $\cOtilde\left(\frac{(cn)^{k+1}}{r^{k+1}} \cdot \frac{1}{(k-1)!} \cdot \sqrt{nkr}\right)$, employing~\cref{lem:prec:vd}.
Finally, constructing the Voronoi diagrams requires time $\cOtilde(n^2+\mathcal{S})$, where $\mathcal{S}$ is the total size of their representation, which is equal to the total number of sites in these diagrams (with multiplicities), as shown in~\cref{lem:prec:vd2}; this dominates the time complexity.
\end{proof}

\begin{comment}
\begin{itemize}
\item for each of the $\cO(\frac{n^{k+1}}{r^{k+1}})$ $(k+1)$-tuples,
\item for each of the $k+1$ pieces $R_i$ of each tuple,
\item for each of the $\cO(1)$ holes, of each of the siblings of the nodes in the root-to-$R_i$ path in $\TG$,
\item for each of the $\cO(k\sqrt{r})$ boundary nodes of the pieces in the tuple,
\item we store a Voronoi diagram of size $\cO(\sqrt{n}/c^\ell_2)$, where $\ell$ is the depth of this sibling in $\TG$.
\end{itemize}

\begin{remark}
\ref{thm:main} is relevant when $k\sqrt{r} < \sqrt{kn}$. If this is not the case one can resort to~\cref{lem:fr}.
\end{remark}
\end{comment}

\section{Efficient Preprocessing}\label{sec:prec}

In this section we show how to efficiently compute the data structures described in~\cref{sec:newtradeoff,sec:oldtradeoff}.
Throughout this section, and similarly to Section~\ref{sec:FR}, when using FR-Dijkstra to compute $DDG^\circ$s, or other distances corresponding to shortest paths with a restriction on the vertices they can go through, we do not relax edges whose tail is a vertex that is not allowed to be on a shortest path.

It is shown in~\cite[Theorem~3]{DBLP:conf/stoc/KleinMS13} that, given a geometrically increasing sequence of numbers $\mathcal{r}=(r_1, r_2, \ldots, r_\nu)$, where $r_1$ is a sufficiently large constant, $r_{i+1}/r_i=b$, for all $i$, for some constant $b>1$, and $r_\nu=n$, we can obtain $r$-divisions for all $r\in \mathcal{r}$ in time $\cO(n)$ in total.
These $r$-divisions satisfy the property that a piece in the $r_i$-division is a weak descendant (in $\TG$) of a piece in the $r_j$-division for each $j>i$.

We first show how to efficiently compute the external $DDG^\circ$s for all $k$-tuples of pieces of an $r$-division, $r \in \mathcal{r}$.
Our algorithm is a natural adaptation of the top-down technique of~\cite{DBLP:journals/talg/BorradaileSW15}
for computing external DDGs to computing \emph{strictly external} DDGs of $k$-tuples.

\begin{lemma}\label{lem:prec:ddgext}
Given $r_i \in \mathcal{r}$ and an integer $d \leq \frac{n}{r_i}$, one can compute $DDG^\circ_{ext}$ for all $d$-tuples of pieces of each $r_t$-division, $t \geq i$, in time $\cO(\frac{(cn)^d}{r_i^{d-1}} \frac{1}{(d-2)!}  \log^2{n})$ for some constant $c>1$.
\end{lemma}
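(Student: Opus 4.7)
The plan is to compute the external DDGs hierarchically across the nested $r_t$-divisions, from the coarsest ($r_\nu=n$, a single piece, so the claim is vacuous for $d\geq 2$) down to the finest $r_i$. The crucial structural property is that every $r_t$-piece is a descendant in $\TG$ of an $r_{t+1}$-piece, so we can reuse the external DDGs already built at the coarser level.

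For the inductive step at level $t$, assume $DDG^\circ_{ext}$ has been computed for all $d'$-tuples ($d'\leq d$) at level $t+1$, and also, within each $r_{t+1}$-piece $Q$, for the sub-tuples of $r_t$-pieces contained in $Q$. For a $d$-tuple $T=(R_{i_1},\ldots,R_{i_d})$ at level $t$, I would (i) locate the level-$(t+1)$ ancestors $Q_1,\ldots,Q_{d'}$ of the $R_{i_j}$'s (with $d'\leq d$, since several $R$'s can share an ancestor); (ii) look up the precomputed $DDG^\circ_{ext}(Q_1,\ldots,Q_{d'})$, which represents the ``outside'' of all $Q_j$'s; (iii) for every $Q_j$, look up the \emph{local} external DDG describing distances inside $Q_j$ that avoid the interior and non-endpoint boundary of the $R$'s contained in $Q_j$; and (iv) run FR-Dijkstra from each of $T$'s $\cO(d\sqrt{r_t})$ boundary vertices on the union of the DDGs from (ii) and (iii), refusing to relax edges whose tail is a tuple boundary vertex other than the source. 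The distances obtained are exactly the entries of $DDG^\circ_{ext}(T)$; correctness follows by decomposing a shortest path in $G\setminus T$ at $\partial(Q_1\cup\dots\cup Q_{d'})$, exactly as in the cone argument of Lemma~\ref{lem:frcone}.

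By Lemma~\ref{lem:rdiv} and the geometric growth $r_{t+1}=\cO(r_t)$, the Dijkstra graph has $\cO(d'\sqrt{r_{t+1}}+\sum_j(\sqrt{r_{t+1}}+d_j\sqrt{r_t}))=\cO(d\sqrt{r_t})$ vertices, where $d_j$ is the number of $R$'s inside $Q_j$ and $\sum_j d_j=d$. Theorem~\ref{thm:FR} gives $\cO(d\sqrt{r_t}\log^2 n)$ per run, and hence $\cO(d^2 r_t\log^2 n)$ per tuple. Multiplying by $g(n,r_t,d)\leq \frac{(cn)^d}{r_t^d\,d!}$ yields $\cO\!\bigl(\frac{(cn)^d}{r_t^{d-1}(d-2)!}\log^2 n\bigr)$ at level $t$; summing this geometric series over $t\geq i$ is dominated by $t=i$ and matches the claimed bound. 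The local DDGs built in (iii) are themselves external DDGs for $d_j$-tuples inside a smaller graph $Q_j$, so their cost satisfies the same recurrence applied within each $Q_j$; a level-by-level sum shows that this recursive cost is dominated by the top-level bound.

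The main obstacle is step~(iii): when several $R$'s share a common $Q_j$ the local structure is itself an external DDG for a strictly smaller tuple inside $Q_j$, so the recursion must be organised so that each local DDG is computed once (amortised across all enclosing level-$t$ tuples that induce the same local sub-tuple inside $Q_j$), and so that the sub-recurrence's total cost fits inside the same geometric sum. A secondary technical point is verifying that the restriction on edge relaxations in~(iv) exactly realises the constraint of $DDG^\circ_{ext}$, namely forbidding all vertices of $T$ as intermediate vertices while still allowing the two endpoints, analogously to how $DDG^\circ$ is handled in Section~\ref{sec:FR}.
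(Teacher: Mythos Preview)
Your overall architecture is the same as the paper's: a top-down induction over the levels of $\mathcal{r}$, where for a $d$-tuple at level $t$ you combine the already-computed $DDG^\circ_{ext}$ of the (at most $d$) ancestor pieces at level $t+1$ with ``local'' information inside each ancestor, and then run FR-Dijkstra from each of the $\cO(d\sqrt{r_t})$ tuple boundary vertices. Your vertex count and per-tuple time bound $\cO(d^2 r_t\log^2 n)$, the multiplication by $g(n,r_t,d)$, and the geometric summation over levels all match the paper.

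The one place where you diverge is exactly what you flag as the ``main obstacle'': how to obtain the local object in step~(iii). You frame it as a recursive computation of external DDGs \emph{within} each ancestor piece $Q_j$, worry about amortising it across enclosing tuples, and leave the sub-recurrence to be checked. The paper sidesteps all of this. It does not recurse: for each ancestor $R_j$ (your $Q_j$) it directly computes $DDG^\circ_{R_j'}$, where $R_j' = R_j \setminus \bigcup \mathcal{Q}_{R_j}$, by running FR-Dijkstra on the precomputed $DDG^\circ$s of siblings-of-ancestors inside the subtree of $\TG$ rooted at $R_j$, exactly as in Section~\ref{sec:FR}. This uses only the $DDG^\circ$s already built during preprocessing, costs $\cO(|\mathcal{Q}_{R_j}|^2 r_{t+1}\log^2 n)$ per ancestor, and summing over $j$ (via Cauchy--Schwarz and $\sum_j|\mathcal{Q}_{R_j}|=d$) gives $\cO(d^2 r_{t+1}\log^2 n)$ per tuple. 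No amortisation argument, no inner recursion, no new object to define.

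So your plan is correct and essentially the paper's, but you have made step~(iii) harder than it needs to be. The fix is to reuse the cone-style computation from Section~\ref{sec:FR} directly rather than treating the local problem as another instance of the lemma.
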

\begin{proof}
We prove this lemma by induction on $\mathcal{r}$ from top to bottom.
For $r_{\nu}=n$, the only piece is $G$, and  $DDG^\circ_{ext}(G)$ is the empty graph.
Assume inductively that we have $DDG^\circ_{ext}(R_1,\ldots,R_d)$ for every $d$-tuple $(R_1,\ldots,R_d)$ of pieces at the $r_{i+1}$-division.
Let $Q_1,\ldots,Q_d$ be pieces at the $r_i$-division. Note that every piece at level $r_i$ is contained in some piece at level $r_{i+1}$, but a piece at level $r_{i+1}$ might contain multiple pieces at level $r_i$. Let $R_1,\ldots,R_d$ be pieces of the $r_{i+1}$-division such that each $Q_j$ is a subgraph of some $R_{j}$; see~\cref{fig:topdown} for an illustration. (If the $r_{i+1}$-division has less than $d$ pieces we just take all of them.)
Let $\mathcal{Q}_{R_j}$ be the maximal subset of $\{Q_1,\ldots,Q_d\}$ such that each piece in $\mathcal{Q}_{R_j}$ is contained in $R_j$.
For every  $j\in \{1, \ldots, d\}$ let us denote the allowed internal part of $R_j$ by $R_j'$. Formally,
\[R_j'= R_j \setminus \left(\bigcup_{Q \in \mathcal{Q}_{R_j}} Q \setminus \partial Q\right).\]
Let us define the boundary of $R_j'$ to be 
\[\partial R_j \bigcup \left(\bigcup_{Q \in \mathcal{Q}_{R_j}} \partial Q\right).\]
Since $R_j$ and each $Q_m \in \mathcal{Q}_{R_j}$ have $\cO(\sqrt{r_{i+1}})$ and $\cO(\sqrt{r_i})$ boundary vertices respectively, $R_j'$ has $\cO(\sqrt{r_{i+1}} + \sqrt{r_i} |\mathcal{Q}_{R_j}|) = \cO(|\mathcal{Q}_{R_j}|\sqrt{r_{i+1}})$ boundary vertices (recall that $r_{i+1} / r_i = b$).

\begin{figure}[htpb!]
    \centering
    \includegraphics[width=10cm]{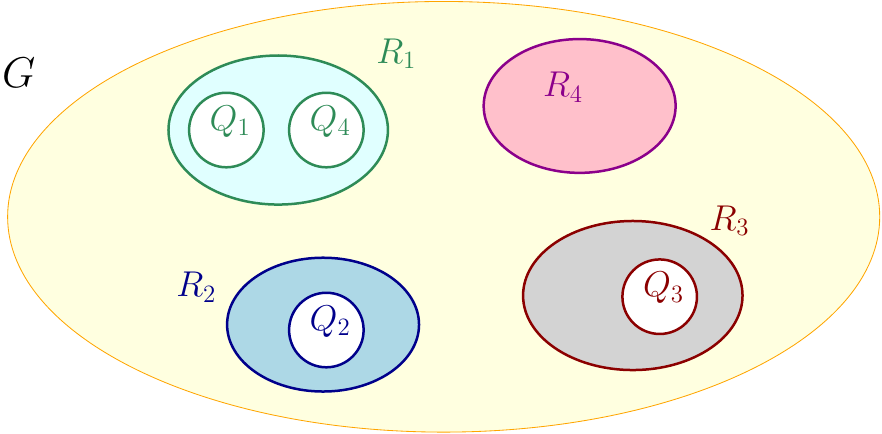}
    \caption{The setting in the proof of~\cref{lem:prec:ddgext}. 
    We have $\mathcal{Q}_{R_1}=\{Q_1,Q_4\}$ and $\mathcal{Q}_{R_4}=\emptyset$. 
    For each~$j$, $R_j'$ is the colored part of $R_j$. 
    For instance, $R_2'=R_2 \setminus (Q_2 \setminus \partial Q_2)$ and $R_4'=R_4$.}
    \label{fig:topdown}
\end{figure}

Let $DDG^\circ_{R_j'}$ be the a complete directed graph on the boundary vertices of $R_j'$ such that the edge $(u,v)$ has weight $d^\circ_{R_j'}(u,v)$ equal to the length of the shortest $u$-to-$v$ path in $R_j'$ that is internally disjoint from the boundary of $R_j'$. 

We compute $DDG^\circ_{R_j'}$ in a similar manner to  the query of \cref{sec:FR} by running FR-Dijkstra on the union of the following $DDG^\circ$s.
For each piece $Q_m \in \mathcal{Q}_{R_j}$, for each ancestor $Q$ of $Q_m$ (including $Q_m$) that is a strict descendant of $R_j$ in $\TG$, we take the $DDG^\circ_P$ of the sibling $P$ of $Q$ if $P$ contains no piece of $\mathcal{Q}_{R_j}$.
The pieces of $\mathcal{Q}_{R_j}$ have $\cO(|\mathcal{Q}_{R_j}| \sqrt{r_i})$ boundary vertices in total and the total number of boundary vertices for their considered ancestors is bounded by $\cO(|\mathcal{Q}_{R_j}| \sqrt{r_{i+1}})$, as the number of boundary vertices in any root-to-leaf path in $\TG$ decreases geometrically (cf.~\cref{lem:rdiv}).
Running FR-Dijkstra from each of the $\cO(|\mathcal{Q}_{R_j}|\sqrt{r_{i+1}})$ boundary vertices of $R_j'$ yields $DDG^\circ_{R_j'}$  and requires $\cO(|\mathcal{Q}_{R_j}|\sqrt{r_{i+1}} |\mathcal{Q}_{R_j}| \sqrt{r_{i+1}} \log^2{n})=\cO(|\mathcal{Q}_{R_j}|^2r_{i+1}\log^2{n})$ time in total. 
When summing over $R_1, \ldots, R_d$ we get 
\[\sum_{j=1}^{d}{|\mathcal{Q}_{R_j}|^2  \cdot r_{i+1}  \cdot \log^2{n}} 
\leq r_{i+1} \cdot \log^2{n} \cdot \left(\sum_{j=1}^{d}{|\mathcal{Q}_{R_j}|}\right)^2 
= d^2 \cdot r_{i+1} \cdot \log^2{n}.\] 
Note that the equality follows from the fact that $\sum_{j=1}^{d}{|\mathcal{Q}_{R_j}|}=d$. 

Let $\mathcal{D} = DDG^\circ_{ext}(R_1,\ldots,R_d)\bigcup (\bigcup _{j=1}^{d} DDG^\circ_{R_j'})$.
Each of $DDG^\circ_{ext}(R_1,\ldots,R_d)$ and $\bigcup_{j=1}^{d} DDG^\circ_{R_j'}$ contributes $\cO(d\sqrt{r_{i+1}})$ boundary vertices to $\mathcal{D}$.
We run FR-Dijkstra on $\mathcal{D}$ from each boundary vertex of $Q_m$ for $m\in \{1, \ldots d\}$ to obtain $DDG^\circ_{ext}(Q_1,\ldots,Q_d)$. 
There are $\cO(d\sqrt{r_i})$ such boundary vertices, so this requires time $\cO(d\sqrt{r_i} d (\sqrt{r_{i+1}} + \sqrt{r_i}) \log^2{n})=\cO(d^2 \cdot r_{i+1} \cdot \log^2{n})$.

We can thus compute $DDG^\circ_{ext}(Q_1, \ldots, Q_d)$ for all $d$-tuples at level $r_i$ in time \[\cO((g(n,r_i,d) \cdot d^2 \cdot r_{i+1} \cdot \log^2{n}) = \cO\left(\frac{(cn)^d}{r_i^{d}} \cdot r_{i+1} \cdot \frac{1}{d!} \cdot d^2 \cdot \log^2{n}\right) = \cO\left(\frac{(cn)^d}{r_i^{d-1}} \cdot \frac{1}{(d-2)!} \cdot \log^2{n}\right),\] assuming that we have the $DDG^\circ_{ext}$s for all $d$-tuples of pieces of $r_t$-divisions, $t>i$.

The time to compute the $DDG^\circ_{ext}$s for all $d$-tuples of pieces of all $r_t$-divisions, $t>i$, is, inductively, \[\cO\left((cn)^d \cdot \frac{1}{(d-2)!} \cdot \log^2{n} \cdot\sum_{t=i+1}^{\nu}\frac{1}{r_t^{d-1}}\right), \text{ and } \sum_{t=i+1}^{\nu}\frac{1}{r_t^{d-1}}=\frac{1}{r_i^{d-1}}\sum_{t=1}^{\nu-i}\left(\frac{1}{b^{d-1}}\right)^t=\cO\left(\frac{1}{r_i^{d-1}}\right)\] since $b^{d-1}>1$.
Thus, computing the $DDG^\circ_{ext}$s for $d$-tuples of pieces of the $r_i$-division dominates the time complexity.
\end{proof}

We next show how to efficiently compute the additive distances with respect to which the Voronoi diagrams stored by our oracle are computed. 

\begin{lemma}\label{lem:prec:vd}
Let $\mathcal R_r$ be an $r$-division, such that $r \in \mathcal{r}$, and let $d \leq \frac{n}{r}$ be an integer. For all $d$-tuples of pieces $R_1, \ldots, R_d$ in $\mathcal R_r$ and for all pieces $Q \in \TG$ such that $Q$ does not contain any of the pieces $R_i$, and $Q$ is a sibling of a node in the root to-$R_i$ path in $\TG$ for some $R_i$, one can compute the distances from each $y \in \bigcup\limits_{i=1}^{d} \partial R_{i}$ to each boundary vertex of $Q$ in the graph $G\setminus \big(\big(\bigcup\limits_{i=1}^{d} R_{i}\big)\setminus \{y\}\big)$ in time $\cO\big(\frac{(cn)^d}{r^d} \cdot \frac{1}{(d-2)!} \cdot \sqrt{ndr} \cdot \log^3{n}\big)$ in total, for some constant $c>1$.
\end{lemma}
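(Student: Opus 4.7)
The plan is to adapt the FR-Dijkstra-based approach of \cref{sec:FR} so that all required distances are computed tuple by tuple. Fix a $d$-tuple $T=(R_{i_1},\ldots,R_{i_d})$ of pieces of $\mathcal R_r$ and write $R_j$ as shorthand for $R_{i_j}$. For every $y\in\bigcup_j \partial R_j$, the target distances are shortest-path distances from $y$ to boundary vertices of each qualifying $Q$ in the graph $H_y := G\setminus\big(\big(\bigcup_j R_j\big)\setminus\{y\}\big)$. I would assemble a single collection $\mathcal D_T$ of DDGs whose union represents $H_y$ faithfully (so the same $\mathcal D_T$ works for every source $y$), and then run FR-Dijkstra from each $y$ on $\mathcal D_T$.

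Concretely, $\mathcal D_T$ consists of $DDG^\circ_{ext}(R_{i_1},\ldots,R_{i_d})$, supplied by \cref{lem:prec:ddgext}, together with the precomputed $DDG^\circ$ of every sibling $S$ in $\TG$ of a (not necessarily strict) ancestor of some $R_j$, provided $S$ contains no other piece of $T$. When a sibling $S$ does contain some $R_{j'}$, I apply the standard recursive replacement from points 2 and 4 of the query in \cref{sec:FR}, substituting $DDG^\circ$s of descendants of $S$ that avoid every $R_{j'}$ as an internal vertex. As in \cref{sec:FR}, during each FR-Dijkstra call we refuse to relax edges whose tail lies in the forbidden set, so no computed distance corresponds to a path violating the restriction of $H_y$.

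Correctness would be an immediate adaptation of the minimal-counterexample argument in the proof of \cref{lem:fr}: for every piece $A\in\TG$ that does not contain any $R_j$ as an internal vertex, $DDG^\circ_A$ is represented in $\mathcal D_T$, and FR-Dijkstra therefore recovers the correct $H_y$-distance between any two vertices of $\mathcal D_T$. In particular every $Q$ satisfying the hypotheses of the lemma either appears directly in $\mathcal D_T$ or is represented via $DDG^\circ$s of its descendants, so the distance from $y$ to every boundary vertex of $Q$ is produced.

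For the time analysis, the total number of vertices of $\mathcal D_T$ (with multiplicities) is $\cO(d\sqrt{r})$ from $DDG^\circ_{ext}$ plus $\cO(\sqrt{n})$ per $R_j$ from the sibling-of-ancestor $DDG^\circ$s (by the geometric-series estimate used in the proof of \cref{lem:fr}), so $\cO(d\sqrt{n})$ in total. By \cref{thm:FR} each FR-Dijkstra call then takes $\cO(d\sqrt{n}\log^2 n)$ time; summing over the $\cO(d\sqrt{r})$ sources $y$ per tuple and over the $g(n,r,d)=\cO((cn)^d/(r^d d!))$ tuples yields the claimed bound $\cO\big(\frac{(cn)^d}{r^d(d-2)!}\sqrt{ndr}\log^3 n\big)$, with room to spare absorbing the extra $\log n$ factor. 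The main obstacle is verifying that the recursive descent triggered when a sibling contains some $R_{j'}$ does not blow up $|\mathcal D_T|$ past the $\cO(d\sqrt n)$ budget, and that the correctness argument of \cref{lem:fr} ports cleanly to this multi-source, multi-target setup.
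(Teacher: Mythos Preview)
Your proposal is correct and matches the paper's proof in its essentials. Both assemble, for a fixed $d$-tuple, the collection of sibling $DDG^\circ$s (for each $R_i$, the $DDG^\circ$ of every sibling of an ancestor of $R_i$ that contains no $R_j$, with the recursive replacement when it does) and run FR-Dijkstra from each $y\in\bigcup_j\partial R_j$ while suppressing relaxations out of forbidden vertices; correctness is the same minimal-counterexample argument from \cref{lem:fr}.

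There are two minor differences worth noting. First, you additionally throw $DDG^\circ_{ext}(R_1,\dots,R_d)$ into $\mathcal D_T$; this is redundant. The sibling $DDG^\circ$s already represent every path in $G\setminus\bigcup_j R_j$, and each edge of $DDG^\circ_{ext}$ either has a forbidden tail (so is never relaxed) or has tail $y$ and a forbidden head (so contributes only a dead-end label at a vertex not in $H_y$). The paper's collection simply omits $DDG^\circ_{ext}$ and lists $DDG^\circ_Q$ explicitly, though the latter is already among the siblings anyway. Second, you correctly observe that a single FR-Dijkstra call from $y$ produces the distances to the boundaries of \emph{all} qualifying $Q$ simultaneously, since every such $Q$ is itself one of the siblings in $\mathcal D_T$. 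The paper's analysis instead charges a separate run per choice of $Q$, spending an extra $d\log n$ factor that it then absorbs into the stated bound. Your accounting is therefore tighter; combined with your (looser) estimate $|\mathcal D_T|=\cO(d\sqrt n)$ in place of the paper's $\cO(\sqrt{dn})$, you still land inside the claimed bound with the $\sqrt d\log n$ slack you noticed. The two obstacles you flag are non-issues: the recursive descent when a sibling contains some $R_{j'}$ is exactly the sibling-of-ancestor contribution already charged to $R_{j'}$ in your $\cO(\sqrt n)$-per-piece budget, and the correctness argument of \cref{lem:fr} transfers verbatim because it is agnostic to the source and targets.
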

\begin{proof}
Let us consider a $d$-tuple of pieces $(R_1, \ldots, R_d)$ and a piece $Q$, satisfying the properties in the statement of the lemma.
To compute the desired distances, we run FR-Dijkstra from each $y \in \bigcup_{i=1}^{d} \partial R_{i}$ on the union of the following $DDG$s:
\begin{enumerate}
\item $DDG^\circ_{Q}$.
\item For each piece $R_i \in \{R_1, \ldots, R_d\}$  for each ancestor $A$ of $R_i$ (including $R_i$) in $\TG$, we take the $DDG^\circ_{B}$ of the sibling $B$ of $A$ if $B$ contains no piece of $R_1, \ldots, R_d$. 
\end{enumerate}

This correctly computes the distances by the same arguments that were applied in \cref{sec:FR}.
It remains to analyze the time complexity. Consider the $(n/d)$-division of $G$ in $\TG$. By the same argument that was applied in the proof of~\cref{lem:fr} we can bound the number of boundary vertices for all the included $DDG^\circ$s by $\cO(\sqrt{dn})$.
There are $\cO(d\sqrt{r})$ choices of $y \in \bigcup\limits_{i=1}^{d} \partial R_{i}$, so the time required to run FR-Dijkstra from each $y$ 
is $\cO(d\sqrt r  \cdot \sqrt{dn} \cdot \log^2 n) = \cO(d \cdot \sqrt{nrd} \cdot \log^2{n})$.

Each piece $R_i \in \{R_1, \ldots , R_d \}$ has $\cO(\log{n})$ nodes in the root-to-$R_i$ path in $\TG$, hence computing the distances for all possible choices of $Q$
requires time $\cO(d^2 \sqrt{nrd} \log^3{n})$. 
Finally, in order to compute the distances for all $d$-tuples of pieces we need time \[\cO((g(n,r,d) \cdot d ^2 \cdot \sqrt{nrd} \cdot \log^3{n}) = \cO\left(\frac{(cn)^d}{r^d} \cdot \frac{1}{d!} \cdot d^2 \sqrt{nrd} \cdot \log^3{n}\right), \text{ as claimed}.\qedhere\]
\end{proof}

\begin{lemma}\label{lem:prec:vd2}
We can compute the representation of the Voronoi diagrams described in~\cref{sec:prel} with respect to sets of sites of total cardinality $\mathcal{S}$, each corresponding to a piece $P\in \TG$ and  consisting of nodes of $\partial P$ that lie on a single hole of $P$, and specifying an additive weight for each of these nodes in time $\cOtilde(n^2+\mathcal{S})$ in total.
\end{lemma}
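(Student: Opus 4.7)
The plan is to invoke Theorem~\ref{thm:Vorexact} once per piece (or once per hole of each piece) and then sum the resulting bounds across all pieces in $\TG$, relying on Proposition~\ref{prop:2} to control the dominant term.

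Concretely, fix a piece $P\in\TG$. Each site set of interest in $P$ is contained in the set of boundary vertices lying on one specific hole of $P$. Since $P$ has $\cO(1)$ holes by~\cref{lem:rdiv}, we may group the input site sets associated with $P$ by hole and process each hole separately. For a fixed hole $h$, take $S$ in Theorem~\ref{thm:Vorexact} to be the boundary vertices of $P$ on $h$ (a subset of $\partial P$), and let the $S'_i$ range over the input site sets for $P$ on $h$, each equipped with the specified additive weights. Theorem~\ref{thm:Vorexact} then builds the point-location structure for all these diagrams in time $\cOtilde(|P|\cdot|S|^2+\sum_i|S'_i|)\le\cOtilde(|P|\cdot|\partial P|^2+\sum_i|S'_i|)$.

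Summing over all pieces $P\in\TG$ and all $\cO(1)$ holes per piece gives total time
\[
\cOtilde\Bigl(\sum_{P\in\TG}|P|\cdot|\partial P|^2\Bigr)+\cOtilde\Bigl(\sum_{P\in\TG}\sum_i|S'_i|\Bigr).
\]
The first sum is $\cO(n^2)$ by~\cref{prop:2}, and the second sum equals $\mathcal{S}$ by definition of the total cardinality of the site sets. Hence the total preprocessing time is $\cOtilde(n^2+\mathcal{S})$, as claimed.

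There is no real obstacle here; the only subtlety is that Theorem~\ref{thm:Vorexact} is stated for a single piece $P$ with an ambient site set $S$ that lies on a single face, so we must (i) split the input per piece and per hole to fit this interface, which is free because each piece has $\cO(1)$ holes, and (ii) observe that the term $|P||S|^2$ in the theorem sums to $\cO(n^2)$ across $\TG$ via~\cref{prop:2}, rather than blowing up. Logarithmic factors introduced by Theorem~\ref{thm:Vorexact} are absorbed into the $\cOtilde$ notation.
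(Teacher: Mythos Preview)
Your proof is correct and follows essentially the same approach as the paper: apply Theorem~\ref{thm:Vorexact} once per hole of each piece, bound the per-hole cost by $\cOtilde(|P||\partial P|^2+\sum_i|S'_i|)$, and then sum over all pieces using Proposition~\ref{prop:2} for the first term and the definition of $\mathcal{S}$ for the second. The paper's proof is terser but structurally identical.
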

\begin{proof}
We apply Theorem~\ref{thm:Vorexact} and construct all the Voronoi diagrams corresponding to each of the $\cO(1)$ holes of each piece as a batch.
For a hole $h$ of a piece $P$, the time required is $\cOtilde(|P||\partial P|^2+\sum_h\mathcal{S}_{P,h})$, where $\mathcal{S}_{P,h}$ is the total cardinality of the sets of sites corresponding to nodes of $\partial P$ lying on $h$.
Then we have that
$$\sum_{P \in \TG} \big(|P||\partial P|^2+\sum_h|\mathcal{S}_{P,h}|\big) = \cO(n^2+\mathcal{S}),$$
by~\cref{prop:2} and hence the stated bound follows.
\end{proof}
\begin{comment}
For a constant $k$, given an $r$-division and a recursive decomposition, one can compute $VD^{*}(S_h,\omega)$ for all boundary vertices $u$ of any $k$-tuples pieces $P_1, \ldots, P_k$ in the $r$-division, for each hole $h$ of any piece $Q$ that is a sibling to an ancestor of some piece in $P_1, \ldots, P_k$ not containing any piece of $P_1, \ldots, P_k$, where $S_h$ is the set of boundary vertices of $Q$ incident to the hole $h$ and the additive weights $\omega$ correspond to the distances in $G \setminus (\bigcup \limits_{i=1}^{k}P_i \setminus \{u\})$ from $u$ to each site in $S_h$ in $\cOtilde(\frac{n^{k+1}}{r^{k-1/2}})$.

\begin{proof}
For any boundary vertex $u$ we can compute $VD(S, \omega)$ using its shortest path tree in $G \setminus (\bigcup \limits_{i=1}^{k}P_i \setminus \{u\})$ in $\cO(n)$ time. The dual representation $VD^{*}(S, \omega)$ can then be obtained in additional $\cO(n)$ time by following the construction given in \cite{Vorexact}. There are $\cO(k\sqrt{r})$ boundary vertices and $(\frac{n}{r})^k$ possible tuples so $\cO((\frac{n}{r})^k k\sqrt{r} n)$ for a single piece $Q$. There are $\cO(k\log{n})$ pieces that are on the cones of $P_1,\ldots,P_k$ which mean we get a total of $\cO(\frac{n^{k+1}}{r^{k-1/2}}k^2\log{n})$

\end{proof}
\end{comment}

\section{Final Remarks}
Perhaps the most intriguing open question related to our results is whether it is possible to answer distance queries subject to even one failure in time $\cOtilde(1)$ with an $o(n^2)$-size oracle.

\bibliographystyle{plainurl}

\end{document}